\documentclass[11pt]{article}

\usepackage[margin=1in]{geometry}
\usepackage{setspace}

\usepackage{mathtools}
\usepackage{amsmath,amssymb,amsfonts}
\usepackage{amsthm}
\usepackage{bbm}

\theoremstyle{plain}
\newtheorem{theorem}{Theorem}[section]

\newtheorem{lemma}[theorem]{Lemma}

\theoremstyle{definition}

\newtheorem{example}[theorem]{Example}

\theoremstyle{remark}

\theoremstyle{definition}

\usepackage{booktabs}
\usepackage{enumitem}

\usepackage{graphicx}
\usepackage{float}
\graphicspath{{./art/}{Figures/}}

\usepackage{caption}
\usepackage[numbers]{natbib}
\usepackage[colorlinks=true,
            linkcolor=blue,
            citecolor=blue,
            urlcolor=magenta]{hyperref}
\usepackage{xcolor}

\DeclareMathOperator{\var}{var}

\DeclareMathOperator{\Expect}{E}

\DeclareMathOperator{\Prob}{P}

\newenvironment{keywords}
{\par\smallskip\noindent\textbf{Keywords:}\ }
{\par\smallskip}

\newcommand{\figuresize}[1]{}
\def\figurebox#1#2#3[#4]{\includegraphics[width=\linewidth]{#4}}
\newenvironment{threeparttable}{}{}
\newenvironment{tablenotes}
{\begin{list}{}{\setlength{\leftmargin}{0pt}\setlength{\labelwidth}{0pt}\setlength{\labelsep}{0pt}}}
{\end{list}}

\newcommand{\E}{\Expect}
\newcommand{\Var}{\var}

\newcommand{\R}{\mathbb{R}}
\newcommand{\1}{\mathbbm{1}}
\newcommand{\eps}{\varepsilon}

\begin{document}

\title{\bfseries Separate versus pooled winsorization for group mean contrasts: a finite-sample theory}

\author{
Chao Cheng\\
Department of Statistics and Data Science\\
Washington University in St. Louis\\
St. Louis, MO, USA\\
\texttt{chaoc@wustl.edu}
\and
Chenshan Hu\\
Department of Strategy, Entrepreneurship and Operations\\
University of Colorado Boulder\\
Boulder, CO, USA\\
\texttt{chenshan.hu@colorado.edu}
\and
Yukai Huang\\
Department of Information System and Operations Management\\
Suffolk University\\
Boston, MA, USA\\
\texttt{Yukai.Huang@suffolk.edu}
}

\date{June 2026}

\maketitle

\begin{abstract}

Comparing group means is foundational to many statistical areas, including two-sample studies, randomized trials, and difference-in-differences designs, yet heavy-tailed outcomes can make conventional estimators unstable. A common remedy is to winsorize the data before estimating the target mean contrast. The dominant approach, \textit{pooled winsorization}, computes winsorization thresholds from the combined sample across all groups, while the rarely used alternative, \textit{separate winsorization}, computes them within each group. We study finite-sample deviation bounds for these two winsorization strategies, and we prove an  impossibility result: no deterministic rule for selecting the \textit{pooled winsorization} level can attain the sub-Gaussian rate. In contrast, \textit{separate winsorization} attains this rate, and the guarantee extends to general linear contrasts of group means. Simulation studies confirm that pooled winsorization can have substantial bias, while separate winsorization remains nearly unbiased and concentrates well around the truth.  These results support a simple recommendation: winsorize within each group rather than after pooling.
\end{abstract}

\begin{keywords}
Deviation inequality; Group mean contrast; Heavy tails; Mean difference; Robust estimation; Winsorization.
\end{keywords}

\section{Introduction}\label{sec:intro}


Consider a data set with $K\geq 2$ groups indexed by $g\in\{0,\ldots,K-1\}$. For group $g$, we observe $Y_{g,1},\ldots,Y_{g,n_g}\stackrel{\mathrm{i.i.d.}}{\sim}P_g$, where
$\mu_g=\E(Y_{g,1})$ is the group mean, $n_g$ is the group size. We focus on robust estimation of the following linear group mean contrast \citep{scheffe1953method}, also referred to as the all-contrast comparison \citep{hsu1996multiple}:
\begin{equation}\label{eq:delta_def}
\Delta =\sum_{g=0}^{K-1} c_g\mu_g,
\end{equation}
where $c_0,\ldots,c_{K-1}$ are pre-specified coefficients with $\sum_{g=0}^{K-1} c_g=0$ and $|c_g|\leq 1$ for all $g$. This formulation is central in many areas of statistics. It covers from the standard two-sample mean difference to more complex mean comparisons in multi-arm trials,  dose-response analyses with ordinal treatments, and difference-in-difference estimators; see Section \ref{sec:multigroup-extension} for these examples.

In many of these settings, outcomes such as healthcare costs, length of stay, earnings, and biomarker measurements are heavy-tailed \citep{zhou1997biometrics, manning2005generalized}, and the naive sample-mean contrast has poor finite-sample behavior even when each $\mu_g$ is well defined. A common remedy is to \emph{winsorize} the data, replacing observations beyond a chosen quantile by the threshold itself, before calculating the contrast; winsorization is a classical robust-statistical technique for limiting the influence of extreme observations \citep{tukey1962future,huber2009robust}. However, in almost all applied uses, winsorization is performed after pooling observations across all groups, so that a single pair of winsorization thresholds is applied uniformly; we refer to this as \emph{pooled winsorization} \citep{joyntmaddox2018bpci, barnett2020bpci, choudhry2022spinecare}. An alternative, \emph{separate winsorization}, computes winsorization thresholds within each group before calculating the contrast. Although pooled winsorization is the de facto standard in practice, it is not obvious that applying a single, shared threshold across heterogeneous groups is statistically defensible, and the two procedures can lead to substantially different estimates.

Our paper compares pooled and separate winsorization through finite-sample bounds for the estimator $\hat{\Delta}$ of the group-comparison parameter \eqref{eq:delta_def}. Specifically, we study whether the deviation $|\hat{\Delta}-\Delta|$ satisfies a sub-Gaussian high-probability bound. Finite-sample guarantees for mean estimation have been extensively studied in the single-group setting \citep{catoni2012deviation, devroye2016subgaussian, lugosi2019sub, kock2025winsorized}. It is known that the naive sample mean is not sub-Gaussian under heavy-tailed data, whereas applying winsorization to the sample mean with a suitable threshold design can yield a sub-Gaussian estimator. In the group-comparison setting, however, it remains open whether winsorization should be performed separately within each group or jointly after pooling the observations. We provide the first answer to this question for winsorized estimators, showing that separate winsorization can yield a sub-Gaussian estimator, whereas pooled winsorization cannot achieve such a guarantee in general.

To the best of our knowledge, the choice between the two procedures has received scarce theoretical attention despite the practical prevalence of pooled winsorization. The closest related work is \cite{wicker2026winsorizing}, which applies both pooled and separate winsorization to two influential empirical studies in economics and shows that the resulting treatment effect estimates can differ substantially. Our work differs in two respects: we allow the within-group distributions to be fully general, and we focus on finite-sample concentration guarantees rather than comparing bias and variance separately. To support our theory, we further conduct simulation studies to evaluate their performance.

\clearpage

\section{A finite-sample theory for two-group mean comparisons}

We first present the main theory in the fundamental two-group setting to clarify the finite-sample behavior between the pooled and separate winsorization. Extension to general linear contrasts of multiple group means is given in Section~\ref{sec:multigroup-extension}.

\subsection{Problem Formulation and Target Bound}
\label{sec: formulation}

Consider $K=2$ with the two-sample mean difference, $\Delta=\mu_1-\mu_0$, as the target. 
For each $g\in\{0,1\}$, we observe $Y_{g,1},\ldots,Y_{g,n_g}\stackrel{\mathrm{i.i.d.}}{\sim}P_g$, and write $F_g$ for the  cumulative distribution function of $P_g$. We assume independence within each group, but do not
require independence across groups; thus, for example, $Y_{1,i}$ and $Y_{0,i}$ may be arbitrarily dependent.
Throughout, assume $F_0, F_1 \in \mathcal{P}(\sigma^2)$, where
\[
  \mathcal{P}(\sigma^2) = \bigl\{ F : F \text{ is a continuous distribution and } \mathrm{Var}(X) \le \sigma^2 \text{ for } X \sim F \bigr\}
  \]
  denotes the class of continuous distributions with variance at most $\sigma^2$; in particular, $F_g(Q_p^{(g)})=p$ for all $p\in(0,1)$, where $Q_p^{(g)}$ is the $p$-th quantile of $F_g$. Define $n_{\min}=\min\{n_0,n_1\}$, and assume $n_0,n_1\to\infty$ with $n_0/n_1$ bounded away from $0$ and $\infty$. The naive mean-difference estimator of $\Delta$ is
$\hat{\Delta}^{\mathrm{naive}}=\bar Y_1-\bar Y_0$, where
$\bar Y_g=n_g^{-1}\sum_{i=1}^{n_g}Y_{g,i}$. Then, a standard Central Limit Theorem (CLT) argument yields
\[
\limsup_{n_{\min}\to\infty}\Prob\!\left(
|\hat\Delta^{(\mathrm{naive})}-\Delta|>
2\sigma\sqrt{\frac{\log(2/\delta)}{n_{\min}}}
\right)
\le \delta,
\qquad \delta\in(0,1).
\]
This is an asymptotic probability bound. Our goal is to obtain a finite-sample
bound with the same radius scale $\sqrt{\log(1/\delta)/n_{\min}}$, i.e.,
there exists a constant $L>0$ such that
\begin{equation}
\label{eq:target-finite-sample}
\Prob\!\left(
  \bigl|\hat{\Delta}^{(\mathrm{naive})}-\Delta\bigr|
  \le
  L\,\sigma\sqrt{\frac{\log(1/\delta)}{n_{\min}}}
\right)
\ge 1-\delta,
\quad \delta\in(0,1).
\end{equation}

This objective is adopted as the  Probably Approximately Correct (PAC)
framework in statistical learning theory \citep{lugosi2019mean}. In this
framework, one seeks quantitative estimates of how the accuracy $|\hat\Delta - \Delta|$ scales
with the confidence parameter $\delta$ and sample size $n_{\min}$. In particular, we
call an estimator sub-Gaussian if there exists a constant $L>0$ such that, for
all $n_{\min}$ and all $\delta\in(0,1)$, the bound
\eqref{eq:target-finite-sample} holds with probability at least $1-\delta$.

This target rate $\sigma\sqrt{\log(1/\delta)/n_{\min}}$ is actually known to be the best possible rate for any estimator under only a finite second moment. Considering the mean estimator for single-group sample of size $n$, \citet{devroye2016subgaussian} establish that, over the class of all distributions with variance at most $\sigma^2$, the sub-Gaussian radius of order $\sigma\sqrt{\log(1/\delta)/n}$ is minimax optimal in the sense that no estimator can attain a uniformly faster rate than it. Our target~\eqref{eq:target-finite-sample} thus represents the fundamental finite-sample limit for the two-sample problem, which inherits the same minimax rate by reduction to the one-sample case.

Such results in \eqref{eq:target-finite-sample} are easily achievable for the naive mean-difference estimator if we assume that the outcome variables have sub-Gaussian tails. However, if we relax this assumption
to allow for heavier tails, this bound need not hold. The
next lemma gives the corresponding lower bound.

\begin{lemma}[Naive mean difference: lower deviation bound]
\label{lem:naive-mean}
There exist universal constants $c_0\in(0,1)$ and $c_1,c_2>0$ such that the following statement holds. For any $\sigma^2>0$, integers $n_0,n_1\ge1$, and $\delta\in(0,c_0]$, one can
find two distributions $P_0,P_1\in\mathcal P(\sigma^2)$, such that, if $Y_{g,1},\dots,Y_{g,n_g}\stackrel{\mathrm{i.i.d.}}{\sim}P_g$ for each
$g\in\{0,1\}$, then
\begin{equation}
\label{eq:mean-lower}
\Prob\!\left(
  \bigl|\hat{\Delta}^{(\mathrm{naive})}-\Delta\bigr|
  \;\ge\;
  c_1\,\sigma\,\sqrt{\frac{1}{n_{\min}\,\delta}}
\right)
\;\ge\;
c_2\,\delta.
\end{equation}
\end{lemma}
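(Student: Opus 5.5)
We reduce the two-sample statement to the classical one-sample lower bound for the empirical mean (Catoni; Devroye et al.), using a single-atom heavy-tailed construction. Let $m$ be the index attaining $n_{\min}$, and write $n=n_{\min}$. Take $P_{1-m}$ to be a point mass at $0$ smoothed into a continuous law with negligible variance, say uniform on $[-\eta,\eta]$ with $\eta$ tiny. For $P_m$ take a continuous version of the three-point law
\[
X=\begin{cases} +a & \text{w.p. } p/2,\\ -a & \text{w.p. } p/2,\\ 0 & \text{w.p. } 1-p,\end{cases}
\qquad p=\frac{2\delta}{n}\wedge 1,\quad a=\sigma\sqrt{\frac{n}{2\delta}}\cdot\frac12 .
\]
Again we replace each atom by a uniform on a tiny interval so that $P_m\in\mathcal P(\sigma^2)$. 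The variance is then $pa^2+O(\eta^2)\le\sigma^2$, and both means are $0$, so $\Delta=0$. Because independence across groups is not needed, we take the groups independent. Then $\hat\Delta^{(\mathrm{naive})}=\pm\bar Y_m\mp\bar Y_{1-m}$, and $|\bar Y_{1-m}|\le\eta$ deterministically.

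The core event is that exactly one of the $n$ observations in group $m$ lands near $+a$ (or near $-a$) and all others land near $0$. This event has probability
\[
np(1-p)^{n-1}\ge 2\delta(1-2\delta/n)^{n-1}\ge 2\delta e^{-4\delta}\ge c\,\delta
\]
for $\delta\le c_0$, using $1-x\ge e^{-2x}$ for small $x$. On this event, $|\bar Y_m|\ge a/n-2\eta$, and
\[
\frac{a}{n}=\frac{\sigma}{2\sqrt{2}}\sqrt{\frac{1}{n\delta}}.
\]
Choosing $\eta\le \sigma/(100\sqrt{n\delta})$ absorbs the perturbations coming from the smoothing of both groups, which gives the bound with, for example, $c_1=1/8$ and $c_2=c$.

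The main point needing care is the continuity requirement in $\mathcal P(\sigma^2)$, since the three-point law is discrete. The smoothing handles this: it keeps the variance at most $\sigma^2$ (we shrink $a$ slightly by a factor $(1-\epsilon)$ if needed, absorbed into $c_1$) and moves each observation by at most $\eta$. The other issue is the regime $2\delta/n>1$, which cannot occur when $n\ge1$ and $\delta\le c_0<1/2$. We also need to ensure the atom probability estimate is uniform in $n$, including $n=1$; there the probability is simply $p=2\delta$, and the same conclusion holds.
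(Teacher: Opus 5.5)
Your proposal is correct and follows the paper's route. The paper's entire proof is a citation of Catoni's one-sample lower bound (Proposition~6.2), and you simply unpack that symmetric three-point construction and reduce the two-sample problem to it. Your version is slightly more complete than the citation: the smoothing step meets the continuity requirement of $\mathcal P(\sigma^2)$, which Catoni's discrete law does not, and since $|\bar Y_{1-m}|\le\eta$ surely, the bound holds under any cross-group dependence, not only under the independence you chose.
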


Lemma~\ref{lem:naive-mean} shows that the naive mean-difference
estimator $\hat \Delta^{(\mathrm{naive})}$ cannot obtain a finite-sample bound with a uniform sub-Gaussian
radius of order $\sigma\sqrt{\log(1/\delta)/n_{\min}}$. This lemma follows immediately from the lower bound for single-group mean estimator in
\citet[Proposition~6.2]{catoni2012deviation}.
Given this limitation, a common remedy is to winsorize the outcome $Y$ before computing mean differences. 
Next, we introduce the pooled and separate winsorization procedures and discuss their finite-sample behavior.

\subsection{Limitations of the Pooled Winsorization Estimator}
\label{sec:pooled-winsor}

We first study pooled winsorization. Winsorization
maps each observation $Y_{g,i}$ through
\[
\phi_{\alpha,\beta}(y)=
\begin{cases}
\alpha, & y<\alpha,\\
y, & \alpha\le y\le \beta,\\
\beta, & y>\beta,
\end{cases}
\]
where $-\infty<\alpha<\beta<\infty$ are the lower and upper winsorization thresholds. Pooled
winsorization computes one pair of cutoffs from the whole sample 
across both groups, and then forms the  mean difference.
Let $n=n_0+n_1$ be the total sample size, $
\mathcal{Z}=\{Y_{g,i}:g\in\{0,1\},\ i=1,\dots,n_g\}$ be the full sample,  $Z_{(1)}\le\cdots\le Z_{(n)}$ be its order statistics,
and $\eps\in(0,1/2)$ be the winsorization level selected by an arbitrary
deterministic rule based on $(n_g,\delta)_{g\in\{0,1\}}$. Define $\hat\alpha_n^{(\mathrm{pool})}=Z_{(\lceil\eps n\rceil)}$ and $\hat\beta_n^{(\mathrm{pool})}=Z_{(\lceil(1-\eps)n\rceil)}$ as the winsorization thresholds.
The pooled winsorization estimator is
\[
\hat\Delta^{(\mathrm{pool})}
=
\hat\mu_1^{(\mathrm{pool})}-\hat\mu_0^{(\mathrm{pool})},
\qquad
\hat\mu_g^{(\mathrm{pool})}
=
\frac{1}{n_g}\sum_{i=1}^{n_g}
\phi_{\hat\alpha_n^{(\mathrm{pool})},\hat\beta_n^{(\mathrm{pool})}}(Y_{g,i}).
\]

A natural question is whether pooled winsorization admits a similar finite-sample bound to that established in \eqref{eq:target-finite-sample}. Specifically, does there
exist a deterministic rule for $\eps$ such that a bound of order
$\sigma\sqrt{\log(1/\delta)/n_{\min}}$ holds uniformly over all distributions in $\mathcal P(\sigma^2)$? The next theorem shows
that this is impossible even when the group size is balanced with $n_0=n_1=m$.

\begin{theorem}[Impossibility result for pooled winsorization]
\label{thm:pooled-mean-impossibility}
Under the balanced group design with  $n_0=n_1=m$, there exists $L>0$ such that the following statement holds. For every $\sigma^2>0$ and every deterministic rule
$
\eps:\mathbb{N}\times(0,\tfrac14)\to(0,\tfrac12),
$
for all $m\ge1$ and $\delta\in(0,\tfrac14)$, there exist 
distributions $P_0,P_1 \in \mathcal{P}(\sigma^2)$ 
such that
\begin{equation}
\label{eq:pooled-mean-impossibility}
\Prob\!\left(
  \bigl|\hat\Delta^{(\mathrm{pool})}-\Delta\bigr|
  >
  L\sigma\sqrt{\frac{\log(2/\delta)}{m}}
\right)
>
\delta.
\end{equation}
\end{theorem}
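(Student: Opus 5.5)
The plan is to exploit the one feature that separates pooled from separate winsorization: the pooled thresholds $\hat\alpha_n^{(\mathrm{pool})},\hat\beta_n^{(\mathrm{pool})}$ are order statistics of the \emph{combined} sample. The variance constraint in $\mathcal P(\sigma^2)$ does not restrict location, so we can shift one group far away from the other. Then the upper pooled threshold lands entirely inside group $1$ and the lower one entirely inside group $0$. In that configuration the upper tail of group $0$ is never clipped, whatever $\eps$ is. Pooled winsorization then acts on group $0$'s upper tail like the naive mean, and we import the heavy-tail lower bound of Lemma~\ref{lem:naive-mean} (i.e.\ Catoni's one-sample construction, \citet[Proposition~6.2]{catoni2012deviation}).

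\textbf{Construction.} Fix $\sigma,m,\delta$ and the level $\eps=\eps(m,\delta)$.
\begin{enumerate}[label=(\roman*)]
\item Take $P_0$ to be a continuous smoothing of Catoni's two-point-type law. It puts mass $1-p$ on a tiny interval near $0$ and mass $p$ on a tiny interval near $x=\sigma/\sqrt{p}$, so it has a bounded lower tail and variance at most $\sigma^2$. The value of $p$ is tuned so that, with probability at least a constant times $\delta$, at least one of the $m$ draws hits the outlier. This gives a sample-mean error of order $x/m=\sigma/\sqrt{m\,(mp)}\asymp\sigma/\sqrt{m\delta}$.
\item Take $P_1$ to be a continuous law with support in $[a,a+\eta]$ for a tiny $\eta$ and a shift $a>x+1$.
\end{enumerate}
Almost surely every observation of group $1$ exceeds every observation of group $0$. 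Since $\lceil(1-\eps)n\rceil>m$ and $\lceil\eps n\rceil\le m$ (because $\eps<1/2$ and $n=2m$), we get $\hat\beta_n^{(\mathrm{pool})}\in[a,a+\eta]$ and $\hat\alpha_n^{(\mathrm{pool})}\le$ the largest group-$0$ point. More precisely, $\hat\alpha_n^{(\mathrm{pool})}$ is a group-$0$ order statistic of rank $\lceil \eps n\rceil$.

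\textbf{Error decomposition.} We split $\hat\Delta^{(\mathrm{pool})}-\Delta$ into three pieces.
\begin{enumerate}[label=(\roman*)]
\item Group $1$: $|\hat\mu_1^{(\mathrm{pool})}-\mu_1|\le\eta$ deterministically.
\item Lower clipping of group $0$: it moves points only within the small interval near $0$, up to the rank-$\lceil\eps n\rceil$ value. Two sub-cases arise. If $\lceil\eps n\rceil$ stays below the number of non-outlier points, this perturbs $\hat\mu_0$ by at most the interval width. Otherwise the outliers themselves are raised, which only increases $\hat\mu_0$ further and so can only help the bound. Either way the perturbation is absorbed.
\item Group $0$'s upper tail is untouched, so on the outlier event $\hat\mu_0^{(\mathrm{pool})}-\mu_0\gtrsim x/m-\mu_0-O(\eta)$.
\end{enumerate}
Sending $\eta\to0$ and collecting terms, we obtain
\[
\Prob\!\left(|\hat\Delta^{(\mathrm{pool})}-\Delta|\ge c_1\sigma/\sqrt{m\delta'}\right)\ge c_2\delta'
\]
for any $\delta'\le c_0$, exactly as in Lemma~\ref{lem:naive-mean}.

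\textbf{Choosing constants.} Apply this with $\delta'=2\delta/c_2$, which turns the right side into $2\delta>\delta$. This is legitimate when $\delta\le c_0c_2/2$. It then suffices that $c_1\sigma\sqrt{c_2/(2m\delta)}>L\sigma\sqrt{\log(2/\delta)/m}$. Because $\sqrt{1/\delta}/\sqrt{\log(2/\delta)}$ is bounded below on $(0,1/4)$, a sufficiently small universal $L$ works.

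\textbf{Main obstacle.} I expect the hardest step to be the remaining range $\delta\in(c_0c_2/2,1/4)$, where Catoni's construction is not available as stated. There I would use the same separated configuration with a fixed-probability outlier. Take $p\asymp 1/m$ so that an outlier appears with probability bounded below by a constant exceeding $1/4$, giving an error of order $\sigma/\sqrt{m}\cdot$const. Then shrink $L$ so that $L\sqrt{\log 8}$ falls below that constant. This requires explicitly recomputing Catoni-type probabilities rather than just quoting Lemma~\ref{lem:naive-mean}. The secondary technical care is making the laws continuous while keeping variance at most $\sigma^2$ exactly and ensuring strict separation of groups almost surely.
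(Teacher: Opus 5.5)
Your argument is correct, and it takes a genuinely different route from the paper's; the resulting statement is also stronger.

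\textbf{The paper's route.} The paper puts the heavy-tailed mixture in the \emph{upper} group. That group's upper tail is clipped at $\hat\beta=Y_{1,(m-t)}$ with $t=\lfloor 2m\eps\rfloor$, so the paper must split on $t$:
\begin{itemize}
\item[(a)] For $t\le\log(2/\delta)+1$, it uses the no-outlier event, which has probability $2\delta$, to get a deviation of order $\sigma\sqrt{\log(2/\delta)/m}$.
\item[(b)] For larger $t$, it sets the upper-bump probability to $p=t/m$. Then, with probability at least $1/2$, $\hat\beta$ lands in the lower bump and induces a bias of at least that order.
\end{itemize}

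\textbf{Your route.} You put the heavy upper tail in the \emph{lower} group instead. That group's upper tail is never clipped, because $\hat\beta$ is always a group-$1$ order statistic. Since $\max\{y,\hat\alpha\}\ge y$, you get $\hat\mu_0^{(\mathrm{pool})}\ge\bar Y_0$ pathwise, and the naive heavy-tail bound transfers directly.

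\textbf{What each buys.} Your approach gives three things:
\begin{itemize}
\item[(i)] A single pair $(P_0,P_1)$ that defeats every $\eps\in(0,1/2)$ at once, even a data-dependent $\eps$, since the argument is pathwise.
\item[(ii)] No case analysis.
\item[(iii)] The larger radius $\sigma/\sqrt{m\delta}$. So pooled winsorization is, in the worst case, no better than the naive contrast of Lemma~\ref{lem:naive-mean}.
\end{itemize}
The paper's construction instead exhibits the asymmetric-truncation bias mechanism that the simulations display.

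\textbf{The step you flagged as the main obstacle is routine.} Take $p=1-(1-2\delta)^{1/m}$.
\begin{itemize}
\item[(1)] Then $\Prob(N_+\ge1)=2\delta$, and $mp\le\log\{1/(1-2\delta)\}\le\min\{\log 2,4\delta\}$ for $\delta<1/4$.
\item[(2)] On $\{N_+\ge1\}$, the error is at least $x(1/m-p)$ minus the smoothing terms.
\item[(3)] Since $x\gtrsim\sigma/\sqrt{p}$ and $1/m-p\ge(1-\log 2)/m$, this is $\gtrsim\sigma/\sqrt{m\delta}$.
\item[(4)] Because $\delta\log(2/\delta)\le\tfrac14\log 8$ on $(0,\tfrac14)$, one universal $L$ (for example $L=1/10$) covers the whole range.
\end{itemize}
This works without splitting $\delta$ and without quoting Catoni's constants.

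\textbf{A minor correction.} In your second lower-clipping sub-case, it is the non-outliers that get raised to an outlier value, not the outliers themselves. The one-line monotonicity bound $\hat\mu_0^{(\mathrm{pool})}\ge\bar Y_0$ covers both sub-cases anyway.
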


Theorem~\ref{thm:pooled-mean-impossibility} establishes an impossibility result for pooled winsorization: regardless of how one designs the rule for selecting $\eps$ (i.e., the fraction of data to be winsorized), there always exists an instance on which the finite-sample bound \eqref{eq:target-finite-sample} fails. Pooled winsorization therefore cannot yield a sub-Gaussian mean estimator. To the best of our knowledge, this is the first impossibility result for pooled winsorization.

We briefly describe the proof idea for Theorem~\ref{thm:pooled-mean-impossibility}. Within $\mathcal{P}(\sigma^2)$, one may choose two distributions whose means are sufficiently separated that the pooled order statistics segregate by group, so that the pooled cutoffs behave as group-specific extreme order statistics and clip the two groups asymmetrically. Any deterministic winsorization rule is then caught in an unavoidable dichotomy: weak winsorization fails to tame the heavy tail, while strong winsorization truncates a disproportionate share of the upper tail in one group, inducing substantial bias in its sample mean. Neither regime admits a sub-Gaussian bound. The full proof is presented in Appendix~\ref{app:proof-pooled-impossibility}.

Moreover, Theorem~\ref{thm:pooled-mean-impossibility} can be extended beyond the balanced group design with a straightforward adjustment, though we omit the details for brevity.
\subsection{Finite-sample Guarantees for the Separate Winsorization Estimator}
\label{sec:separated-winsor}
We next introduce the separate winsorization estimator.  For group $g \in \{0,1\}$, let $Y_{g,(1)} \le \cdots \le Y_{g,(n_g)}$ denote the corresponding order statistics. We then define the lower and upper winsorization thresholds separately for each group as $\hat\alpha_g = Y_{g,(\lceil \varepsilon n_g \rceil)}$ and $\hat\beta_g = Y_{g,(\lceil (1-\varepsilon) n_g \rceil)}$, where $\varepsilon > 0$ is the winsorization level chosen as a function of $\delta$ and $n_{\min}$, to be specified later. The separate winsorization estimator $\hat\Delta^{(\mathrm{sep})}$ is defined as


\begin{equation}
\label{eq:delta-sep-mean}
\hat\Delta^{(\mathrm{sep})} = \hat\mu_1^{(\mathrm{sep})} - \hat\mu_0^{(\mathrm{sep})}, \qquad \hat\mu_g^{(\mathrm{sep})}
=
\frac{1}{n_g}\sum_{i=1}^{n_g}
\phi_{\hat\alpha_g,\hat\beta_g}(Y_{g,i}).
\end{equation}
Then, with a carefully chosen value of $\varepsilon$, we have the following theorem.

\begin{theorem}[Finite-sample deviation for separate winsorization]
\label{thm:sep-winsor-mean}
Assume $P_0, P_1 \in \mathcal{P}(\sigma^2)$. For
any fixed $c>1$, there exists $L>0$ such that, for all
$\delta\in(0,1)$, with $u_\delta=\log(48/\delta)$,
$A_c=2c^2+\tfrac{8c}{3}$, and $\varepsilon=A_cu_\delta/n_{\min}\le 1/4$,
\begin{equation}
\label{eq:sep-winsor-mean-bound}
\Prob\!\left(
  \bigl|\hat\Delta^{(\mathrm{sep})}-\Delta\bigr|
  \;\le\;
  L\,\sigma\,\sqrt{\frac{\log(48/\delta)}{n_{\min}}}
\right)
\;\ge\;
1-\delta.
\end{equation}
\end{theorem}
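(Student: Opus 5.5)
The plan is to reduce to a one-group statement and combine the two groups with a union bound. Since the two groups may be arbitrarily dependent but $\hat\Delta^{(\mathrm{sep})}-\Delta=(\hat\mu_1^{(\mathrm{sep})}-\mu_1)-(\hat\mu_0^{(\mathrm{sep})}-\mu_0)$, it suffices to show that, for each $g$, with probability at least $1-\delta/2$,
\[
|\hat\mu_g^{(\mathrm{sep})}-\mu_g|\le L'\sigma\sqrt{u_\delta/n_g}.
\]
We then use $n_g\ge n_{\min}$ and the triangle inequality. Dependence across groups is harmless because only marginal events are used. Within a group, the argument follows the trimmed/winsorized mean analysis of Lugosi--Mendelson and Kock et al. The key intermediate objects are the population quantiles $Q_{\eps/c'}$ and $Q_{1-\eps/c'}$ and the quantiles at levels $c\eps$. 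With $\eps=A_cu_\delta/n_g\le 1/4$, Bernstein's inequality for the binomial counts $\#\{i:Y_{g,i}\le Q_p\}$ yields an event of probability $\ge 1-O(e^{-u_\delta})$ on which the empirical thresholds satisfy $Q_{\eps/c}\le\hat\alpha_g\le Q_{c\eps}$ and $Q_{1-c\eps}\le\hat\beta_g\le Q_{1-\eps/c}$. The constant $A_c=2c^2+8c/3$ is exactly what makes the Bernstein exponent at least $u_\delta$ at relative deviation $c$. Continuity of $F_g$ gives $F_g(Q_p)=p$. The several such events, across both groups and both tails, account for the $48$ in $u_\delta$.

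On this event, decompose
\[
\hat\mu_g-\mu_g=\Bigl[\tfrac1{n_g}\sum_i\phi_{\alpha^*,\beta^*}(Y_{g,i})-\E\phi_{\alpha^*,\beta^*}(Y)\Bigr]+\Bigl[\E\phi_{\alpha^*,\beta^*}(Y)-\mu_g\Bigr]+\tfrac1{n_g}\sum_i\bigl[\phi_{\hat\alpha,\hat\beta}-\phi_{\alpha^*,\beta^*}\bigr](Y_{g,i}),
\]
with deterministic thresholds $\alpha^*=Q_{\eps/c}$ and $\beta^*=Q_{1-\eps/c}$. The three terms are handled separately.

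\emph{Stochastic term.} By Chebyshev at the quantile, $|\alpha^*-\mu_g|,|\beta^*-\mu_g|\le\sigma\sqrt{c/\eps}$. Hence the clipped variable is bounded in range by $O(\sigma\sqrt{n_g/u_\delta})$ and has variance at most $\sigma^2$. Bernstein then gives a deviation of order $\sigma\sqrt{u_\delta/n_g}+\sigma\sqrt{n_g/u_\delta}\,u_\delta/n_g=O(\sigma\sqrt{u_\delta/n_g})$.

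\emph{Bias term.} We have $|\E\phi(Y)-\E Y|\le\E[(Y-\beta^*)_+]+\E[(\alpha^*-Y)_+]$. By Cauchy--Schwarz, $\E[(Y-\beta^*)_+]\le\sqrt{\E(Y-\mu)^2\,P(Y>\beta^*)}\le\sigma\sqrt{\eps/c}$, which is $O(\sigma\sqrt{u_\delta/n_g})$.

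\emph{Threshold-perturbation term.} This is the term I expect to be the main obstacle, because it involves data-dependent thresholds. The function $\phi_{\hat\alpha,\hat\beta}-\phi_{\alpha^*,\beta^*}$ vanishes on the central region between the $c\eps$ and $1-c\eps$ quantiles. It is bounded by $|Y-\mu|$ on the sandwiched region and by $|\hat\beta-\beta^*|$ or $|\hat\alpha-\alpha^*|$ beyond it. Its absolute average is therefore bounded by
\[
\tfrac1{n_g}\sum_i |Y_{g,i}-\mu_g|\1\{Y_{g,i}\notin(Q_{c\eps},Q_{1-c\eps})\}+2\cdot\tfrac{\#\{i:Y_{g,i}>Q_{1-c\eps}\}}{n_g}\,\max(|\beta^*-\mu|,|\hat\beta-\mu|)
\]
plus the symmetric lower-tail terms. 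On the good event the tail counts are $O(\eps n_g)$ and the threshold distances are $O(\sigma\sqrt{c/\eps})$, so the second piece is $O(\eps\cdot\sigma/\sqrt\eps)=O(\sigma\sqrt{u_\delta/n_g})$.

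For the first piece, I would clip $|Y-\mu|$ at the deterministic level $\sigma\sqrt{c/\eps}$. The excess above that level is already covered by the count bound. The clipped, tail-restricted variable has mean at most $\sigma\sqrt{2c\eps}$ by Cauchy--Schwarz and variance at most $\sigma^2$, so one more Bernstein application gives $O(\sigma\sqrt{u_\delta/n_g})$. Collecting the three terms, taking $L$ proportional to the resulting constants (which depend on $c$), and union bounding over all the events yields \eqref{eq:sep-winsor-mean-bound}.
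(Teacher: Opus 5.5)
There is a gap in your localization step, though it is repairable. Your architecture is otherwise the paper's: a decomposition around deterministic thresholds, the Chebyshev bound $|Q_p-\mu|\le\sigma/\sqrt{\min\{p,1-p\}}$ controlling all ranges, and a union bound that needs no cross-group independence.

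\textbf{The localization window.} You claim that $n_g\varepsilon\ge A_cu_\delta$ lets Bernstein give $Q_{\varepsilon/c}\le\hat\alpha_g\le Q_{c\varepsilon}$ (and the mirror statement for $\hat\beta_g$) with probability $1-O(e^{-u_\delta})$. This fails for $c$ near $1$.
\begin{itemize}
\item The event $\{\hat\alpha_g>Q_{c\varepsilon}\}$ only needs the $\mathrm{Bin}(n_g,c\varepsilon)$ count below $Q_{c\varepsilon}$ to fall below $\varepsilon n_g$, a deviation of just $(c-1)\varepsilon n_g$.
\item Bernstein's exponent is then $(c-1)^2n_g\varepsilon/\{2c+\tfrac23(c-1)\}$. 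At $c=1.1$ and $n_g\varepsilon=A_cu_\delta$ this is about $0.024\,u_\delta$, not $u_\delta$.
\item This is not slack in Bernstein. The normal approximation gives a probability near $\Phi(-0.22\sqrt{u_\delta})\approx0.28$ at $\delta=0.05$.
\end{itemize}
As $c\downarrow1$ your window $[\varepsilon/c,c\varepsilon]$ collapses, while $A_c\to14/3$ stays bounded. So for $c$ near $1$, including the $c=1.1$ used in the paper's simulations, your good event is not high-probability, yet the theorem is claimed for every $c>1$.

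The constant $A_c$ is calibrated for the additive window $\varepsilon(1\pm1/c)$ of the paper's quantile-localization lemma, since $n(\varepsilon/c)^2/\{2(\varepsilon+\varepsilon/c)+\tfrac23\varepsilon/c\}=n\varepsilon/A_c$. Either of two windows repairs the step:
\begin{itemize}
\item the paper's window $\varepsilon(1\pm1/c)$;
\item a fixed-ratio window such as $[\varepsilon/3,3\varepsilon]$, which needs only $n_g\varepsilon\ge\tfrac52u_\delta$, and $A_c>14/3$ for every $c>1$.
\end{itemize}
In either case, place $\alpha^*,\beta^*$ at the outer ends of the new window, or simply at $Q_\varepsilon,Q_{1-\varepsilon}$ as the paper does.

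\textbf{Smaller corrections.}
\begin{itemize}
\item The estimator uses the common level $\varepsilon=A_cu_\delta/n_{\min}$, not $A_cu_\delta/n_g$. Your Bernstein steps survive, since they only use $n_g\varepsilon\ge A_cu_\delta$. But the bias and perturbation terms are of order $\sigma\sqrt\varepsilon$, so the per-group rate you can prove is $\sigma\sqrt{u_\delta/n_{\min}}$. The rate $\sigma\sqrt{u_\delta/n_g}$ fails in general for the larger group; the weaker rate is all the theorem needs.
\item The bias step $\E[(Y-\beta^*)_+]\le\sigma\sqrt{\Prob(Y>\beta^*)}$ presumes $\beta^*\ge\mu$. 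In general you must add $|\mu-\beta^*|\Prob(Y>\beta^*)\le\sigma\sqrt{\varepsilon/c}$, as the paper does; this only doubles the constant.
\item Your second Bernstein step for the perturbation term, on clipped $|Y-\mu|$, is unnecessary. Wherever $\phi_{\hat\alpha,\hat\beta}-\phi_{\alpha^*,\beta^*}$ is nonzero it is bounded by the threshold gap. So $|\hat\alpha-\alpha^*|$ times the empirical lower-tail frequency, plus the analogous upper term, already gives $O(\sigma\sqrt\varepsilon)$ from the counts and quantile distances you control.
\end{itemize}
With these repairs your proof becomes essentially the paper's.
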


Theorem~\ref{thm:sep-winsor-mean} is the natural counterpart to the
pooled-winsorization impossibility result in
Theorem~\ref{thm:pooled-mean-impossibility}: whereas no deterministic rule for
the pooled level $\varepsilon$ can deliver a sub-Gaussian deviation of order
$\sigma\sqrt{\log(1/\delta)/n_{\min}}$ for $\hat\Delta^{(\mathrm{pool})}$, there exists a deterministic rule of $\varepsilon$ for $\hat\Delta^{(\mathrm{sep})}$ that achieves such order. Detailed proof for Theorem~\ref{thm:sep-winsor-mean} can be found in Appendix~\ref{app:proof-sep-winsor-main}.
Moreover, we present numerical evidence to further support Theorem~\ref{thm:sep-winsor-mean} in Section~\ref{sec:numerical}. We also extend the key idea and core proof strategy of separate winsorization to multi-group mean contrasts setting, and we show that, under this extension, a sub-Gaussian estimator can still be obtained by performing winsorization separately within each group. Details of this extension are provided in Section~\ref{sec:multigroup-extension}.



\section{Numerical Results}
\label{sec:numerical}

We conduct a Monte Carlo study comparing $\hat{\Delta}^{(\mathrm{naive})}$, $\hat\Delta^{(\mathrm{sep})}$, and $\hat\Delta^{(\mathrm{pool})}$ for the two-sample mean difference $\Delta = \mu_1-\mu_0$. For $g\in\{0,1\}$ and $i\in\{1,\dots,n_g\}$, $Y_{g,i}=g + \omega_{g,i}$ are independently generated. We consider five scenarios for the distribution of  $\omega_{g,i}$, including Gaussian and four heavy-tailed distributions: Student-$t$ with 2.5 degrees of freedom, Pareto with shape parameter 2.2, lognormal with log-scale standard deviation 1.5, and  exponential distribution. All distributions are standardized to have mean 0 with variance 1. The true mean difference is $\Delta=1$ based on the above data generating process. Regarding sample size, we consider a balanced group design with $(n_0,n_1)=(400,400)$ and an imbalanced group design with $(n_0,n_1)=(1200,300)$. For both winsorized estimators, we use the same winsorization level $\varepsilon=A_c\log(48/\delta)/n_{\min}$, with $c=1.1$ and $\delta=0.05$, as in Theorem~\ref{thm:sep-winsor-mean}, which gives $\varepsilon=0.092$ and $0.123$ in the balanced and imbalanced design. Appendix~\ref{app:additional-simulation} reports additional simulations for winsorization levels $\varepsilon\in[0.005,0.3]$, showing that separate winsorization remains stable while pooled winsorization is more sensitive to the winsorization level.

Based on 5,000 Monte Carlo replications, we summarize performance across the three estimators in terms of bias, root mean squared error (RMSE), and containment in the theoretical 95\% CLT confidence interval for the naive estimator. Specifically, the containment measure is the percentage of replications in which a point estimate lies inside the interval $
\Delta \pm z_{0.975} \times \sigma \sqrt{{1}/{n_1}+{1}/{n_0}}$, where $z_{0.975}$ denotes the 0.975 quantile of the standard normal distribution. The containment measure directly reflects the deviation probability studied in our theory, namely whether $\Prob(|\hat\Delta-\Delta|\leq u)$ exceeds a prescribed radius of $u = z_{0.975} \times \sigma \sqrt{{1}/{n_1}+{1}/{n_0}}$, so higher containment is more favorable.

Table~\ref{tab:sim_error_metrics} reports the results. Separate winsorization performs similarly to the naive estimator under Gaussian outcomes and substantially improves RMSE under heavy-tailed outcomes, especially for the Pareto and lognormal designs. Its containment is also comparable to, and often higher than, that of the naive estimator under the theoretical 95\% CLT interval centered at the true $\Delta$. In contrast, pooled winsorization introduces substantial negative bias, particularly in imbalanced designs, and its estimates often fall outside the theoretical 95\% CLT interval. For example, under the imbalanced Student-$t$ design, pooled winsorization has bias $-0.2324$, RMSE $0.2349$, and only $0.08\%$ containment, compared with bias $-0.0003$, RMSE $0.0414$, and $99.78\%$ containment for separate winsorization. These results reinforce the theoretical message that winsorization should be performed separately within groups.

\begin{table}[!htbp]
\centering
\caption{Bias, RMSE, and theoretical-CLT containment of the three estimators.}
\label{tab:sim_error_metrics}
\begin{threeparttable}
\footnotesize
\setlength{\tabcolsep}{4.7pt}
\renewcommand{\arraystretch}{1.15}
\begin{tabular}{llrrrrrrrrr}
\toprule
& & \multicolumn{3}{c}{Naive} 
& \multicolumn{3}{c}{Separate} 
& \multicolumn{3}{c}{Pooled} \\
\cmidrule(lr){3-5} \cmidrule(lr){6-8} \cmidrule(lr){9-11}
Scenario & Design 
& Bias & RMSE & Cont.
& Bias & RMSE & Cont. 
& Bias & RMSE & Cont. \\
\midrule
Exponential 
& Balanced   
& $-$0.0006 & 0.0713 & 94.68
& $-$0.0004 & 0.0683 & 95.42
& $-$0.1041 & 0.1186 & 73.06 \\
& Imbalanced 
& $-$0.0006 & 0.0636 & 95.12
&  0.0016 & 0.0605 & 96.46
& $-$0.1694 & 0.1736 & 13.62 \\
\midrule
Lognormal  
& Balanced   
&  0.0006 & 0.0694 & 94.98
&  0.0002 & 0.0300 & 100.00
& $-$0.0768 & 0.0800 & 99.74 \\
& Imbalanced 
& $-$0.0003 & 0.0630 & 95.68
&  0.0018 & 0.0245 & 100.00
& $-$0.1440 & 0.1444 & 5.98 \\
\midrule
Normal     
& Balanced   
&  0.0020 & 0.0714 & 95.54
&  0.0019 & 0.0725 & 95.26
& $-$0.1488 & 0.1628 & 43.46 \\
& Imbalanced 
& $-$0.0008 & 0.0646 & 94.88
& $-$0.0010 & 0.0663 & 94.34
& $-$0.2482 & 0.2536 & 1.20 \\
\midrule
Pareto     
& Balanced   
&  0.0001 & 0.0550 & 97.98
&  0.0001 & 0.0235 & 100.00
& $-$0.0697 & 0.0724 & 100.00 \\
& Imbalanced 
&  0.0004 & 0.0578 & 98.14
&  0.0009 & 0.0202 & 100.00
& $-$0.1416 & 0.1420 & 8.60 \\
\midrule
Student-$t$ 
& Balanced   
&  0.0002 & 0.0698 & 96.06
& $-$0.0001 & 0.0471 & 99.72
& $-$0.1363 & 0.1424 & 51.80 \\
& Imbalanced 
& $-$0.0001 & 0.0621 & 95.92
& $-$0.0003 & 0.0414 & 99.78
& $-$0.2324 & 0.2349 & 0.08 \\
\bottomrule
\end{tabular}
\begin{tablenotes}
\footnotesize
\item Note: ``Cont.'' denotes the percentage of replications in which the point estimate lies inside the theoretical 95\% CLT interval 
$\Delta \pm z_{0.975}\sigma\sqrt{1/n_1+1/n_0}$.
\end{tablenotes}
\end{threeparttable}
\end{table}
\section{Multi-Group Mean Contrasts With Separate Winsorization}
\label{sec:multigroup-extension}

We now extend the separate winsorization estimator to the general linear contrast $\Delta$ defined in \eqref{eq:delta_def} with $K\geq 2$ groups. As before, assume $P_g \in \mathcal{P}(\sigma^2)$ for all $g\in\{0,\dots,K-1\}$. We require the observation $Y_{g,i}$ to be independent within
each group, but impose no independence condition across different groups. Thus, observations from different groups may be arbitrarily dependent; this accommodates, for example, before--after comparisons in which repeated outcomes from the same unit are observed over multiple periods, with each period treated as a group. Let $n_{\min}=\min_{0\le g\le (K-1)}n_g$. For a common winsorization level $\varepsilon\in(0,1/2)$, define groupwise
empirical cutoffs as $\hat\alpha_g=Y_{g,(\lceil\varepsilon n_g\rceil)}$ and $\hat\beta_g=Y_{g,(\lceil(1-\varepsilon)n_g\rceil)}$. Then, the separate winsorization estimator is given by
\begin{equation}
\label{eq:multigroup-contrast-estimator}
\hat\Delta^{(\mathrm{sep})}
=
\sum_{g=0}^{K-1} c_g\hat\mu_g^{(\mathrm{sep})}, \qquad \hat\mu_g^{(\mathrm{sep})}
=
\frac{1}{n_g}\sum_{i=1}^{n_g}
\phi_{\hat\alpha_g,\hat\beta_g}(Y_{g,i}).
\end{equation}
We have the following finite-sample deviation bound  for $\hat\Delta^{(\mathrm{sep})}$:

\begin{theorem}[Separate winsorization for multi-group mean contrasts]
\label{thm:sep-winsor-multigroup}
Assume $P_0,\dots,P_{K-1} \in \mathcal P(\sigma^2)$. For
any fixed $c>1$, there exists $L>0$ such that, for all
$\delta\in(0,1)$, with $u_\delta=\log(48K/\delta)$,
$A_c=2c^2+\tfrac{8c}{3}$, and $\varepsilon=A_cu_\delta/n_{\min}\le 1/4$,
\begin{equation}
\label{eq:multigroup-contrast-bound}
\Prob\!\left(
  \bigl|\hat\Delta^{(\mathrm{sep})}-\Delta\bigr|
  \;\le\;
  LK\,\sigma\,\sqrt{\frac{\log(48K/\delta)}{n_{\min}}}
\right)
\;\ge\;
1-\delta.
\end{equation}
\end{theorem}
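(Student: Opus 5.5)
The plan is to reduce Theorem~\ref{thm:sep-winsor-multigroup} to a single-group deviation bound applied to each group, followed by a union bound over the $K$ groups. Because $\sum_g c_g=0$ and $|c_g|\le1$, we have
\[
|\hat\Delta^{(\mathrm{sep})}-\Delta|\le\sum_{g=0}^{K-1}|c_g|\,|\hat\mu_g^{(\mathrm{sep})}-\mu_g|\le K\max_g|\hat\mu_g^{(\mathrm{sep})}-\mu_g|.
\]
Only the marginal law of each group enters each summand. This is why no cross-group independence is needed, and it is also where the factor $K$ in the radius comes from. It therefore suffices to prove the following claim: for a single group of size $n_g\ge n_{\min}$ with law in $\mathcal P(\sigma^2)$, and with $\varepsilon=A_cu/n_{\min}\le1/4$, the event $|\hat\mu_g^{(\mathrm{sep})}-\mu_g|\le L\sigma\sqrt{u/n_{\min}}$ has probability at least $1-48e^{-u}$. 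Taking $u=u_\delta=\log(48K/\delta)$ and a union bound over $g$ then gives total failure probability at most $\delta$. The same single-group claim, with $K=2$, is presumably what underlies Theorem~\ref{thm:sep-winsor-mean}, so I would prove it once.

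For the single-group claim, fix $g$ and write $n=n_g$. I would first localize the empirical cutoffs between population quantiles. Let $p_-<p_+$ be chosen around $\varepsilon$, say $p_\pm=\varepsilon(1\pm 1/c)$ up to constants. Bernstein's inequality applied to $\sum_i\1\{Y_{g,i}\le Q^{(g)}_{p}\}$, which uses continuity of $F_g$ so that these indicators are Bernoulli$(p)$, shows that with probability at least $1-4e^{-u}$ we have $Q^{(g)}_{p_-}\le\hat\alpha_g\le Q^{(g)}_{p_+}$, and symmetrically for $\hat\beta_g$. The constant $A_c=2c^2+8c/3$ is exactly what makes the Bernstein exponent $(\varepsilon n/c)^2/(2\varepsilon n+\tfrac{2}{3}\varepsilon n/c)$ exceed $u$ when $\varepsilon n\ge A_cu$.

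On this event I would sandwich the estimator by monotonicity. Since $\phi_{\alpha,\beta}(y)$ is nondecreasing in $\alpha$ and in $\beta$, we get
\[
\frac1n\sum_i\phi_{Q_{p_-},Q_{1-p_+}}(Y_{g,i})\le\hat\mu_g^{(\mathrm{sep})}\le\frac1n\sum_i\phi_{Q_{p_+},Q_{1-p_-}}(Y_{g,i}).
\]
Both bounds are means of i.i.d.\ variables with deterministic truncation levels. Each of these is controlled by Bernstein: the variance is at most $\sigma^2$, and the range is at most $|Q_{1-p}-Q_{p}|\lesssim\sigma/\sqrt{\varepsilon}$ by Chebyshev. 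This gives a fluctuation of order $\sigma\sqrt{u/n}+\sigma u/(n\sqrt\varepsilon)\lesssim\sigma\sqrt{u/n_{\min}}$. The truncation bias is at most $\E|Y-\mu|\1\{|Y-\mu|>t\}\le\sigma^2/t$, where $t\gtrsim\sigma/\sqrt{\varepsilon}$. This yields bias $\lesssim\sigma\sqrt\varepsilon\asymp\sigma\sqrt{u/n_{\min}}$.

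The main obstacle is this bias step. Chebyshev gives only an upper bound on $|Q_p-\mu|$, not the lower bound used above. If a quantile $Q_{1-p}$ lies close to $\mu$, the clipped mass near it is small but the threshold is not far out. I would handle this through the identity $\E\phi_{\alpha,\beta}(Y)-\mu=\E(\alpha-Y)_+-\E(Y-\beta)_+$. Because $P(Y>\beta)=p\asymp\varepsilon$, Cauchy--Schwarz gives $\E(Y-\beta)_+\le\sqrt{p}\,\sqrt{\E(Y-\beta)_+^2}$. A direct argument then shows $\E(Y-\beta)_+^2\le\sigma^2$ whenever $\beta\ge\mu$, with the opposite case handled similarly; this yields $\E(Y-\beta)_+\lesssim\sigma\sqrt\varepsilon$ without any lower bound on the quantile. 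Here $n_g\ge n_{\min}$ ensures $\varepsilon n_g\ge A_cu$, so every group satisfies the localization step, and the constant $48$ absorbs the few Bernstein events per group.
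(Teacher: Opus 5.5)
Your proposal is correct. The outer layer matches the paper's Steps 2--3: a per-group bound at level $\delta/K$ (your $48e^{-u_\delta}=\delta/K$), a union bound over the $K$ groups that needs no cross-group independence, and $\sum_g|c_g|\le K$.

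The genuine difference is in the single-group step. The paper follows Kock and Preinerstorfer and splits $|\hat\mu_g^{(\mathrm{sep})}-\mu_g|$ into three terms:
\begin{itemize}
\item a threshold-perturbation term $|\hat\mu_{\hat\alpha,\hat\beta}-\hat\mu_{\alpha,\beta}|$;
\item a Bernstein term at the population cutoffs $(Q_\varepsilon,Q_{1-\varepsilon})$;
\item a bias term.
\end{itemize}
The perturbation term is bounded by $|\hat\alpha-\alpha|F_m(\alpha^+)+|\hat\beta-\beta|\{1-F_m(\beta^-)\}$. Beyond localization, this needs an extra empirical tail-count event $\mathcal E_{\mathrm{tail}}$ and Chebyshev bounds on the quantile spacings $|\alpha^+-\alpha^-|$ and $|\beta^+-\beta^-|$.

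Your monotone sandwich removes the perturbation term entirely. It is valid because $\phi_{\alpha,\beta}(y)=\min\{\max(y,\alpha),\beta\}$ is nondecreasing in each cutoff, and $p_+<2\varepsilon\le 1/2$ keeps the sandwiching cutoffs ordered. After localization you only need one-sided Bernstein bounds and a bias bound at two deterministic, asymmetric cutoff pairs. This is more elementary and has fewer events to control. The paper's route instead isolates the cost of estimating the cutoffs as a separate term with an explicit constant $L_1$, in the form used in the single-group literature.

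Your bias fix is essentially the paper's Lemma~\ref{lem:winsor-bias}. On $\{Y>\beta\}$, write $|Y-\beta|\le|Y-\mu|+|\mu-\beta|$. Cauchy--Schwarz and the Chebyshev upper bound $|\mu-\beta|\le\sigma/\sqrt{p}$ then give $2\sigma\sqrt{p}$, whether $\beta\ge\mu$ or $\beta<\mu$. So the obstacle you flagged never arises: no lower bound on $|Q_p-\mu|$ is needed.

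One small slip is in the localization step at the level $p_+=\varepsilon(1+1/c)$. There the Bernstein variance term is $2n\varepsilon(1+1/c)$, so the denominator is $2\varepsilon n+\tfrac83\varepsilon n/c$. That is where the $8c/3$ in $A_c$ comes from. The denominator you wrote would only justify a smaller constant, so $A_c$ as stated still suffices.
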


The proof follows a similar idea from
Theorem~\ref{thm:sep-winsor-mean}. Full details are presented in
Appendix~\ref{app:proof-sep-winsor-multigroup}. Theorem~\ref{thm:sep-winsor-multigroup} shows that the deviation radius grows at most linearly with the number of groups through the
factor $K$. Thus, when the number
of groups is fixed, the separate winsorization estimator retains the same
sub-Gaussian rate as in the two-group case. This highlights that, for any multi-group mean contrast, winsorization should also be carried out within groups. 
The following examples illustrate how the general mean contrast formulation in
\eqref{eq:delta_def} covers several common multi-group comparison problems. 

\begin{example}[Difference-in-differences]
The difference-in-differences (DiD) estimator is widely used in economics and health sciences to compare
changes in outcomes between a treated group and a control group before and after a treatment. Let $\mu_{rt}$ denote the mean outcome for group $r\in\{0,1\}$ in period $t\in\{0,1\}$, where $r=1$ denotes
the treated group and $t=1$ denotes the post-treatment period. The usual
difference-in-differences estimand is
\[
    \Delta_{\mathrm{DiD}}
    =
    (\mu_{11}-\mu_{10})-(\mu_{01}-\mu_{00})
    =
    \mu_{11}-\mu_{10}-\mu_{01}+\mu_{00}.
\]
After indexing the four group-period cells $rt\in\{11,10,01,00\}$ as groups $g\in\{0,1,2,3\}$, this is a linear contrast \eqref{eq:delta_def} with
coefficients $c_{0}=1$, $c_{1}=-1$, $c_{2}=-1$, and $c_{3}=1$.
\end{example}

\begin{example}[Multi-arm treatment comparisons]
In multi-arm trials, we may have one control group ($g=0$) with two or more active treatment groups ($g=1,\dots,K-1$). A common target is the average effect of the active
arms relative to control \citep{hsu1996multiple},
\[
    \Delta_{\mathrm{multi\text{-}arm}}
    =
    \frac{1}{K-1}\sum_{g=1}^{K-1} \{\mu_g-\mu_0\}.
\]
This is a linear contrast in \eqref{eq:delta_def} with $c_0=-1$ and
$c_g=1/(K-1)$ for $g=1,\dots,K-1$.
\end{example}

\begin{example}[Dose-response trend with ordinal treatments] 
Consider a randomized trial with $K$ groups with ordinal treatment levels
$d_0<d_1<\cdots<d_{K-1}$. In dose-response studies, the trend contrast is
commonly used to assess systematic changes in outcome responses across increasing
treatment levels \citep{stewart2000detecting}. Let $a_g$ be a pre-specified dose score for group $g$, which can be the actual dose $d_g$ or the ordinal score $g$. The dose-response trend is defined as
\[
    \Delta_{\mathrm{dose}}
    =
    \sum_{g=0}^{K-1}(a_g-\bar a)\mu_g,
    \qquad
    \bar a=K^{-1}\sum_{g=0}^{K-1}a_g,
\]
which corresponds to \eqref{eq:delta_def} with $c_g = a_g-\bar a$ for $g=0,\dots,K-1$. The coefficients can be normalized by dividing each $a_g-\bar a$ by $\max_{0\leq j\leq K-1}|a_j-\bar a|$ to satisfy
$|c_g|\leq 1$.
\end{example}


\section*{Supplementary material}
\label{SM}
The Supplementary Material contains technical proofs and additional simulation studies.

\bibliographystyle{plainnat}
\bibliography{paper-ref}

\clearpage

\begin{center}
    \textbf{Supplement to ``Separate versus pooled winsorization for group mean contrasts: a finite-sample theory"}
\end{center}

\vspace{0.4cm}

This document provides technical support for the manuscript. Appendices
\ref{app:proof-pooled-impossibility}, \ref{app:proof-sep-winsor-main},
and \ref{app:proof-sep-winsor-multigroup} provide the proofs of
Theorems \ref{thm:pooled-mean-impossibility}, \ref{thm:sep-winsor-mean},
and \ref{thm:sep-winsor-multigroup}, respectively. Appendix
\ref{app:additional-simulation} presents additional simulation studies.

\appendix
\section{Proof of Theorem~\ref{thm:pooled-mean-impossibility} (Deterministic Pooled Impossibility)}
\label{app:proof-pooled-impossibility}

Throughout this proof, we fix a balanced group size $n_0=n_1=m$ and let
\[
n=2m,\qquad
\hat\alpha=\hat\alpha_n^{(\mathrm{pool})}=Z_{(\lceil 2m\eps\rceil)},\qquad
\hat\beta=\hat\beta_n^{(\mathrm{pool})}=Z_{(\lceil 2m(1-\eps)\rceil)}.
\]
And we set $k=\lceil 2m\eps\rceil,~ t=\lfloor 2m\eps\rfloor.$ 

To prove Theorem~\ref{thm:pooled-mean-impossibility}, we construct distributions
$P_0,P_1\in \mathcal{P}(\sigma^2)$ under which the lower bound in
\eqref{eq:pooled-mean-impossibility} holds for every possible choice of the
winsorization rule $\eps$. Since $\eps$ may induce different degrees of truncation, we divide the analysis
into two regimes. Let $u$ denote the number of observations winsorized on one
side of the distribution, as determined by the rule $\eps$. We define the weak
winsorization regime as the case where $u \le \log(2/\delta)+1$, and the strong
winsorization regime as the case where $u > \log(2/\delta)+1$. For each regime, we construct a different pair of distributions under which the
lower bound in \eqref{eq:pooled-mean-impossibility} holds.

In the following subsection, we will first state and prove the supporting lemmas for the proof of Theorem~\ref{thm:pooled-mean-impossibility}:

\subsection{Preparatory Lemmas for Theorem~\ref{thm:pooled-mean-impossibility}}

\begin{lemma}\label{lem:pool-sep}
Fix $r>0$, $M\ge0$, and $D>M+2r$. If $Y_{0,i}\in[-D-r,-D+r]$, $Y_{1,i}\in[-M-r,\infty)\ \text{a.s.},$
then we have
\[
\max_{1\le i\le m}Y_{0,i}<\min_{1\le i\le m}Y_{1,i} ~ \text{a.s.}.
\]
\end{lemma}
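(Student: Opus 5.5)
This follows by chaining the almost-sure range constraints on the two groups. First I bound every observation of group $0$ from above, then every observation of group $1$ from below, and finally check that the hypothesis $D>M+2r$ makes the first bound strictly smaller than the second.

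Fix an outcome in the probability-one event on which $Y_{0,i}\in[-D-r,-D+r]$ and $Y_{1,i}\ge -M-r$ for all $i=1,\dots,m$. This is a finite intersection of almost-sure events, so it has probability one.

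\textbf{Upper bound for group $0$.} Each $Y_{0,i}\le -D+r$, so
\[
\max_{1\le i\le m}Y_{0,i}\le -D+r .
\]

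\textbf{Lower bound for group $1$.} Each $Y_{1,i}\ge -M-r$, so
\[
\min_{1\le i\le m}Y_{1,i}\ge -M-r .
\]

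\textbf{Comparing the two.} The hypothesis $D>M+2r$ rearranges to $-D+r<-M-r$. Chaining the three inequalities gives
\[
\max_{1\le i\le m} Y_{0,i}\le -D+r<-M-r\le \min_{1\le i\le m}Y_{1,i}.
\]
This holds on the probability-one event, which proves the claim almost surely.

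There is no real obstacle here. The only point to watch is that the strictness comes from $D>M+2r$, and this works because the maximum and minimum are over finitely many terms. The assumptions $r>0$ and $M\ge0$ play no role beyond making the intervals well defined.
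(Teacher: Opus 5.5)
Your proof is correct and matches the paper's argument exactly: the paper also chains $\max_i Y_{0,i}\le -D+r<-M-r\le\min_i Y_{1,i}$, with the strict middle inequality being a rearrangement of $D>M+2r$. One small quibble is that the strictness comes from the uniform gap between the two constants, not from the index set being finite; finiteness only matters for intersecting the almost-sure events.
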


\begin{proof}
Immediate from $-D+r<-M-r\le\min_iY_{1,i}$ and $\max_iY_{0,i}\le -D+r$.
\end{proof}

\begin{lemma}\label{lem:pool-index}
We define the event
\[
\mathsf{Sep}=\{\max_{1 \leq i \leq m}Y_{0,i}<\min_{1 \leq i \leq m}Y_{1,i}\}. 
\]
On this event, we have
\[
\hat\alpha=Y_{0,(k)},\qquad
\hat\beta=Y_{1,(m-t)}.
\]
\end{lemma}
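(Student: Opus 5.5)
On the event $\mathsf{Sep}$, the pooled sample splits into two blocks: every $Y_{0,i}$ lies strictly below every $Y_{1,j}$. The plan is to use this to express the pooled order statistics through the group-wise ones, and then locate the indices $k$ and $\lceil 2m(1-\eps)\rceil$. Concretely, sorting the $2m$ pooled values puts the $m$ group-$0$ values in positions $1,\dots,m$ and the $m$ group-$1$ values in positions $m+1,\dots,2m$. Hence
\[
Z_{(j)}=Y_{0,(j)}\quad (1\le j\le m),\qquad Z_{(m+j)}=Y_{1,(j)}\quad (1\le j\le m).
\]
Ties between groups cannot occur on $\mathsf{Sep}$. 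Ties within a group do not affect order statistics as values.

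For the lower threshold: since $\eps<1/2$, we have $2m\eps<m$, so $k=\lceil 2m\eps\rceil\le m$. Also $k\ge 1$, because $\eps>0$ and $m\ge1$. Therefore $\hat\alpha=Z_{(k)}=Y_{0,(k)}$.

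For the upper threshold, the main step is the index identity
\[
\lceil 2m(1-\eps)\rceil = 2m-\lfloor 2m\eps\rfloor = 2m-t,
\]
which holds because $2m$ is an integer and $\lceil N-x\rceil=N-\lfloor x\rfloor$ for integer $N$. Since $t=\lfloor 2m\eps\rfloor\le 2m\eps<m$, we get $2m-t=m+(m-t)$ with $1\le m-t\le m$. By the display above, $\hat\beta=Z_{(2m-t)}=Y_{1,(m-t)}$.

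The only delicate point is this ceiling/floor bookkeeping, in particular checking that $m-t\ge1$, which uses the strict inequality $\eps<1/2$. Everything else follows directly from the block structure on $\mathsf{Sep}$.
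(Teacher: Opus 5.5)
Your proof is correct and follows the same route as the paper: on $\mathsf{Sep}$ the pooled order statistics split into the group-$0$ block followed by the group-$1$ block, and the identity $\lceil 2m(1-\eps)\rceil = 2m-\lfloor 2m\eps\rfloor$ locates $\hat\beta$ at $Y_{1,(m-t)}$. You also make explicit the range checks $1\le k\le m$ and $1\le m-t\le m$ (from $0<\eps<1/2$), which the paper's proof of this lemma uses without stating.
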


\begin{proof}
On $\mathsf{Sep}$, pooled order statistics split by group: the first $m$ are group~0 and the
last $m$ are group~1. Hence $\hat\alpha=Z_{(k)}=Y_{0,(k)}$. Also
\[
\hat\beta
=
Z_{(\lceil 2m(1-\eps)\rceil)}
=
Y_{1,(\lceil 2m(1-\eps)\rceil-m)}
=
Y_{1,(m-t)},
\]
because $\lceil 2m(1-\eps)\rceil=2m-\lfloor 2m\eps\rfloor=2m-t$.
\end{proof}

\begin{lemma}\label{lem:pool-binom}
If $X\sim\mathrm{Bin}(m,p)$ and $mp\in\mathbb{Z}$, then
\[
\Prob(X\le mp)\ge\tfrac12.
\]
\end{lemma}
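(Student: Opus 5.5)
This is the classical fact that when the mean $mp$ of a binomial is an integer, it is also a median. I would prove it directly rather than through a general median theorem. Set $k=mp\in\{0,1,\dots,m\}$ and let $b(j)=\binom{m}{j}p^j(1-p)^{m-j}$. The degenerate cases $p\in\{0,1\}$, equivalently $k\in\{0,m\}$, are trivial: for $p=1$ we have $\Prob(X\le m)=1$, and for $p=0$ we have $X\equiv 0$. So I assume $0<p<1$ and $1\le k\le m-1$.

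The cleanest route is the standard comparison of tails: show $\Prob(X>k)\le \Prob(X<k)$. Then
\[
2\Prob(X\le k)\ge \Prob(X\le k)+\Prob(X<k)\ge \Prob(X\le k)+\Prob(X>k)=1.
\]
To compare the tails, I would use the identity $\E(X-k)=0$. It gives
\[
\sum_{j>k}(j-k)b(j)=\sum_{j<k}(k-j)b(j).
\]
This alone does not order the probabilities, so I would pair terms symmetrically. The key claim is that for each $r\ge1$ with $k+r\le m$,
\[
b(k+r)\le b(k-r),
\]
where $b(k-r)=0$ if $k-r<0$. Also, the right tail needs at most as many nonzero terms as the left tail in a suitable sense. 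In fact I expect the cleaner statement to be the ratio
\[
\frac{b(k+r)}{b(k-r)}=\prod_{i=1}^{r}\frac{b(k+i)\,b(k-i+1)}{b(k+i-1)\,b(k-i)}\cdot(\text{telescoping}),
\]
together with the successive-ratio formula $b(j+1)/b(j)=\frac{(m-j)p}{(j+1)(1-p)}$. Using $p=k/m$, this becomes
\[
\frac{b(j+1)}{b(j)}=\frac{(m-j)k}{(j+1)(m-k)}.
\]

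So the main step is to show that for $i\ge1$,
\[
\frac{b(k+i)}{b(k+i-1)}\cdot\frac{b(k-i)}{b(k-i+1)}\le1,
\]
that is,
\[
\frac{(m-k-i+1)k}{(k+i)(m-k)}\le\frac{(k-i+1)(m-k)}{(m-k+i)k}.
\]
Taking $\log$ of the product over $i$ then yields $b(k+r)\le b(k-r)$ for $r\le k$. Cross-multiplied, the inequality reads
\[
k^2(m-k-i+1)(m-k+i)\le (m-k)^2(k-i+1)(k+i).
\]
This is $k^2[(m-k)^2+(m-k)-i^2+i]\le (m-k)^2[k^2+k-i^2+i]$, which simplifies to
\[
(i^2-i)\bigl(k^2-(m-k)^2\bigr)\ge k(m-k)(k-(m-k))\cdot(-1)\cdots
\]
This last comparison is where I expect the main obstacle. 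It is not symmetric in $k$ versus $m-k$, so pointwise pairing may fail when $k>m/2$.

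If the pairing fails, my fallback is to cite or reprove the known result (Kaas and Buhrman, 1980) that the median of $\mathrm{Bin}(m,p)$ equals $mp$ whenever $mp$ is an integer. Their proof rests on the strict unimodality of $b$ with mode $k$, combined with an integral representation
\[
\Prob(X\le k)=(m-k)\binom{m}{k}\int_0^{1-p}t^{m-k-1}(1-t)^k\,dt.
\]
This converts the claim into a comparison of an incomplete beta integral at $p=k/m$ with $1/2$. That comparison I would settle by checking that the derivative in $p$ of $\Prob(X\le mp)$ along $p=k/m$ has the right sign.

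Since only the constant $\tfrac12$ is needed downstream, any of these routes suffices. Citing Kaas–Buhrman is the most economical option.
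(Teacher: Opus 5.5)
Your fallback is the paper's proof. The paper simply invokes the classical fact that every median of $\mathrm{Bin}(m,p)$ lies in $\{\lfloor mp\rfloor,\lceil mp\rceil\}$, which is essentially the Kaas--Buhrman result you cite. With that citation your argument is complete and coincides with the paper's. The direct route you lead with, however, is not just unfinished: its key claims are false, so it should be dropped rather than repaired.

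First, the tail comparison $\Prob(X>k)\le\Prob(X<k)$ is false whenever $p>1/2$. For $m=3$, $k=2$ one gets $\Prob(X>2)=8/27>7/27=\Prob(X<2)$. This case is needed downstream: in Regime~2 of the pooled proof, $p=t/m$ with $t=\lfloor 2m\varepsilon\rfloor$ can exceed $1/2$. For $p>1/2$, the substitution $X\mapsto m-X$ shows the lemma is equivalent to $\Prob(Y\ge mq)\ge 1/2$ for $Y\sim\mathrm{Bin}(m,q)$ with $q=1-p<1/2$. That is the \emph{other} half of the median property. Since $\Prob(Y<mq)$ is then the larger tail, you would have to show it is still at most $\Prob(Y>mq)+\Prob(Y=mq)$, and your tail comparison cannot give that.

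Second, even for $p<1/2$ the pointwise pairing $b(k+r)\le b(k-r)$ fails. For $r>k$ the partner $b(k-r)$ is $0$ while $b(k+r)>0$; the right tail has more atoms, not fewer. It can also fail for $r\le k$: with $m=7$, $k=3$ one has $b(6)=20412/7^7>16384/7^7=b(0)$. Your per-step sufficient condition, once simplified, reads $k(m-k)\ge m\,i(i-1)$ when $2k<m$, so in that example it already breaks at $i=2$.

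Finally, ``the derivative in $p$ along $p=k/m$'' is not a meaningful argument for fixed $m$, because $p$ then ranges over a discrete set. So the lemma rests on the citation alone, exactly as in the paper.
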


\begin{proof}
A classical fact for binomial distribution states that any median of
$\mathrm{Bin}(m,p)$ lies in $\{\lfloor mp\rfloor,\lceil mp\rceil\}$. If $mp$ is an integer,
this set is $\{mp\}$, so $mp$ is a median.
\end{proof}

\begin{lemma}[Log comparison]\label{lem:pool-log}
For every $\delta\in(0,\tfrac14)$, we have that
\[
\log\frac1{2\delta}\ge \frac13\log\frac2\delta.
\]
\end{lemma}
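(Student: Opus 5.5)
We need to show that for $\delta\in(0,1/4)$, $\log\frac{1}{2\delta}\ge\frac13\log\frac2\delta$. The plan is a direct reduction to a single inequality in $x=1/\delta>4$. Writing $\log\frac1{2\delta}=\log x-\log 2$ and $\log\frac2\delta=\log x+\log 2$, the claim becomes
\[
3(\log x-\log 2)\ge \log x+\log 2
\quad\Longleftrightarrow\quad
2\log x\ge 4\log 2
\quad\Longleftrightarrow\quad
x\ge 4 .
\]

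Since $\delta<\tfrac14$ gives $x>4$, the inequality holds, in fact strictly. Each step is an equivalence obtained by adding constants and multiplying by the positive number $3$ (then $1/2$), so no sign issues arise.

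Both sides are positive on the range: $\log\frac{1}{2\delta}>\log 2>0$ because $\delta<\tfrac14$. This is not needed for the equivalence, but it confirms the statement is not vacuous.

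There is no real obstacle. The only point to check is the threshold: equality occurs exactly at $\delta=\tfrac14$, which is why the range $\delta\in(0,\tfrac14)$ is imposed. This is consistent with the domain of $\delta$ in Theorem~\ref{thm:pooled-mean-impossibility}.
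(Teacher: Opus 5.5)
Your proof is correct and is essentially the paper's argument. The paper sets $a=\log\frac{1}{2\delta}$ and uses $\log\frac{2}{\delta}=2\log 2+a\le 3a$ whenever $a\ge\log 2$; your reduction to $x=1/\delta\ge 4$ is the same condition written in a different variable.
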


\begin{proof}
Let $a=\log(1/(2\delta))$. Since $\delta\le1/4$, $a\ge\log 2$, and
\[
\log\frac2\delta=\log\!\left(4\cdot\frac1{2\delta}\right)=2\log2+a\le3a.
\]
Thus the inequality holds.
\end{proof}
With the above lemmas, we could show that if the winsorization rule is weak (i.e., $u\le \log(2/\delta) + 1$), we have the following finite-sample lower bound for one-sided winsorization. 
\begin{lemma}\label{lem:pool-undertrim}
Set $L=1/4$. For any $m\ge1$,
$\delta\in(0,\tfrac14)$, and integer $u\in\{0,\dots,m-1\}$ with
$u\le \log(2/\delta)+1$, there exists a continuous distribution 
$P\in\mathcal{P}(\sigma^2)$ with mean $\mu$ such that the one-sided
winsorization estimator
\[
\hat\mu_u=\frac1m\sum_{i=1}^m\min\{X_i,X_{(m-u)}\},\qquad X_i\stackrel{\mathrm{i.i.d.}}{\sim}P,
\]
satisfies
\[
\Prob\!\left(
\left|\hat\mu_u-\mu\right|>
L\sigma\sqrt{\frac{\log(2/\delta)}{m}}
\right)>\delta.
\]
\end{lemma}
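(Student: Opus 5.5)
The plan is to use a two-point ``rare large value'' construction, smoothed to make it continuous. The key observation is that if many sample points share essentially the same huge value $H$, then clipping the top $u$ of them at $X_{(m-u)}$ changes nothing. So weak one-sided winsorization behaves exactly like the naive mean on such a distribution.

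\textbf{Construction.} Let $\ell$ be an integer with $u+1\le \ell$ and $\ell\asymp\log(2/\delta)$; taking $\ell=\min\{m,\lfloor\log(2/\delta)\rfloor+2\}$ is allowed by the hypothesis $u\le\log(2/\delta)+1$. Take a probability $p\in(0,1)$ with $mp$ a fixed fraction of $\ell$, say $mp=\ell/2$. Let $P$ put mass $1-p$ on a tiny interval $[0,\eta]$ and mass $p$ on a tiny interval $[H,H+\eta]$, uniform on each. Choose $H$ with $pH^2\le\sigma^2/2$, e.g.\ $H=\sigma/\sqrt{2p}$. For $\eta$ small enough, $P$ is continuous, $\mathrm{Var}(X)\le\sigma^2$, and $\mu=pH+O(\eta)$.

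\textbf{Step 1 (clipping is harmless).} Let $N=\#\{i:X_i\ge H\}\sim\mathrm{Bin}(m,p)$. On the event $\{N\ge u+1\}$ we have $X_{(m-u)}\ge H$. Hence every clipped observation is replaced by a value in $[H,H+\eta]$, and
\[
\hat\mu_u=\frac{NH}{m}+O(\eta).
\]
Consequently, on $\{N\ge\ell\}\subseteq\{N\ge u+1\}$,
\[
\hat\mu_u-\mu\ge H\Bigl(\frac{\ell}{m}-p\Bigr)-O(\eta)=\frac{H\ell}{2m}-O(\eta).
\]
Substituting $H=\sigma\sqrt{m/\ell}$ gives a lower bound of order $\sigma\sqrt{\ell/(4m)}\gtrsim \sigma\sqrt{\log(2/\delta)/m}$. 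This is where Lemma~\ref{lem:pool-log} helps: it converts $\ell$-type quantities such as $\log(1/(2\delta))$ into $\tfrac13\log(2/\delta)$, so the constant $L=1/4$ can be checked.

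\textbf{Step 2 (the probability bound).} We need $\Prob(N\ge\ell)>\delta$ when $\mathbb{E}N=\ell/2$. I would lower bound this by the single term
\[
\Prob(N=\ell)=\binom{m}{\ell}p^\ell(1-p)^{m-\ell},
\]
using the standard estimates $\binom{m}{\ell}\ge(m/\ell)^\ell\cdot(\text{correction})$ and $(1-p)^{m}\ge e^{-2mp}$ for $p\le1/2$. This yields roughly $e^{-c\ell}$ with an explicit $c<1$, and since $\ell\approx\log(2/\delta)$ this exceeds $\delta$.

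\textbf{Main obstacle.} I expect this probability step to be the main difficulty: matching the constants so that $e^{-c\ell}>\delta$ and the deviation exceeds $\tfrac14\sigma\sqrt{\log(2/\delta)/m}$ \emph{simultaneously}. If the ratio $mp=\ell/2$ does not balance them, I would tune it to $mp=\theta\ell$ with $\theta$ chosen to trade the Poisson-type tail exponent against the factor $(1-\theta)/\sqrt\theta$ in the deviation. As a fallback, I would choose $mp$ to be an integer and use the complementary form of Lemma~\ref{lem:pool-binom}, i.e.\ $\Prob(N\ge mp)\ge1/2$, which holds by the same median argument.

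\textbf{Degenerate case.} When $\log(2/\delta)>m$, so that $\ell$ is capped at $m$, I would instead take $p$ of order $1/m$ and use the event $\{N=m\}$ or $\{N\ge1\}$ directly. In this regime the target radius exceeds $\sigma/4$, and a single large atom already produces a deviation of order $\sigma\sqrt m$ with probability above $\delta$.
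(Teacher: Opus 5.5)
Your Step 1 is sound. On $\{N\ge u+1\}$ the cap $X_{(m-u)}$ lies in the upper bump, so clipping is harmless. On $\{N\ge \ell\}$ the deviation is at least $\tfrac12\sigma\sqrt{\ell/m}-O(\eta)$, which beats the radius because $\ell>\log(2/\delta)$.

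The gap is Step 2. You never establish $\Prob(N\ge\ell)>\delta$, and each route you sketch fails.

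(i) The choice $mp=\ell/2$ fails. For every $\delta\in(0.15,\tfrac14)$ and $m\ge 4$ you get $\ell=4$ and $N\sim\mathrm{Bin}(m,2/m)$. At $m=4$ (where $u=3$ is allowed) $\Prob(N\ge 4)=1/16$, and as $m\to\infty$ it tends to $\Prob(\mathrm{Poisson}(2)\ge 4)\approx 0.143$. Both are below $\delta$.

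(ii) The single-term estimate is too lossy. With $mp=\theta\ell$, the bounds $\binom{m}{\ell}\ge(m/\ell)^\ell$ and $(1-p)^m\ge e^{-2mp}$ give only $\Prob(N=\ell)\ge\exp\{-\ell(2\theta+\log(1/\theta))\}$. Since $2\theta+\log(1/\theta)\ge 1+\log 2>1$ for every $\theta$, and $\ell>\log(2/\delta)$, this lower bound is below $\delta/2$, not $e^{-c\ell}$ with $c<1$. Getting an exponent below $1$ needs an entropy/Stirling-type bound on $\binom{m}{\ell}$ that is uniform in $\ell\le m$. You would also have to control the $\ell^{-1/2}$ prefactor, which dominates exactly for small $\ell$, i.e.\ $\delta$ near $\tfrac14$, where the constants are tightest.

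(iii) The median fallback is vacuous. On $\{N\ge mp\}$ your bound $H(N/m-p)-O(\eta)$ gives no deviation at all.

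(iv) The degenerate case is mis-specified. With $p\asymp 1/m$, $\Prob(N=m)\approx (c/m)^m$, which is far below $\delta$ when $\delta$ is just under $2e^{-m}$ and $m$ is large. The event $\{N\ge 1\}$ does not neutralize clipping when $u\ge 1$: if $1\le N\le u$, the cap sits in the lower bump. And one atom at height $\asymp\sigma\sqrt m$ shifts the sample mean by only $\asymp\sigma/\sqrt m$, not $\sigma\sqrt m$, while the target there exceeds $\sigma/4$.

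The missing idea is to calibrate the rare event so its probability is exact rather than estimated from a binomial tail. The paper takes $P=(1-q)\,\mathrm{Unif}[-r,r]+q\,\mathrm{Unif}[B-r,B+r]$ with $q=1-(2\delta)^{1/m}$ and $B$ chosen so that $\mathrm{Var}(X)=\sigma^2$. It works on the event that \emph{no} observation lands in the upper bump, which has probability $(1-q)^m=2\delta>\delta$. On that event every observation, clipped or not, lies in $[-r,r]$, so $|\hat\mu_u-\mu|\ge\mu-r$ for \emph{every} $u$; the hypothesis $u\le\log(2/\delta)+1$ is never used. Moreover $\mu=qB\ge 0.99\,\sigma\sqrt{q/(1-q)}$ with $q/(1-q)=e^{x}-1\ge x=\log(1/(2\delta))/m$, and Lemma~\ref{lem:pool-log} then gives $\mu-r>\tfrac14\sigma\sqrt{\log(2/\delta)/m}$.

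The correct version of your degenerate-case idea is essentially this construction reflected. Use the event $\{N=m\}$ with $p=(2\delta)^{1/m}$, and calibrate $H$ by $p(1-p)H^2\approx\sigma^2$ rather than $H=\sigma/\sqrt{2p}$, which is far too small when $p$ is near $1$. This event has probability $2\delta$ and yields deviation about $\sigma\sqrt{(1-p)/p}\ge\sigma\sqrt{\log(1/(2\delta))/m}$. It works for all $m$ and all $u$, which makes your Steps 1--2 unnecessary.
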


\begin{proof}
Let
\[
q=1-(2\delta)^{1/m},\quad s=1-q,\quad
r=\min\!\left\{\frac{\sigma}{100},\frac{\sigma}{40}\sqrt{\frac{\log(2/\delta)}{m}}\right\},
\]
and define $X$ to be the mixture of two uniform distributions
\[
X\sim (1-q)\,\mathrm{Unif}[-r,r]+q\,\mathrm{Unif}[B-r,B+r],\qquad
B=\sqrt{\frac{\sigma^2-r^2/3}{q(1-q)}}.
\]
Then $X$ is continuously distributed with mean
\[
\mu=\E[X] = qB=\sqrt{\sigma^2-r^2/3}\,\sqrt{\frac{q}{1-q}}.
\]
and variance
\[
\Var(X)=q(1-q)B^2+r^2/3=\sigma^2.
\]

Let $N_+ = \sum_{i=1}^m \mathbb{I}(X_i \in [B-r, B+r])$ be the number of observations falling into the upper-bump component $[B-r, B+r]$, so that $N_+\sim\mathrm{Bin}(m,q)$, and set $\mathsf{A} = \{N_+=0\}$. Then
\[
\Prob(\mathsf{A})=(1-q)^m=s^m=2\delta>\delta.
\]
On $\mathsf{A}$, all observations lie in $[-r,r]$, hence $|\hat\mu_u|\le r$ for every
$u\in\{0,\dots,m-1\}$ and
\[
|\hat\mu_u-\mu|\ge \mu-r.
\]

Write $x=\log(1/(2\delta))/m$. Since $q/(1-q)=e^x-1\ge x$ and by
Lemma~\ref{lem:pool-log},
\[
\mu
\ge
\frac{99}{100}\sigma\sqrt{\frac{1}{3}\frac{\log(2/\delta)}{m}}.
\]
Also $r\le\frac1{40}\sigma\sqrt{\log(2/\delta)/m}$. Therefore
\[
\mu-r
\ge
\left(\frac{99}{100\sqrt3}-\frac1{40}\right)\sigma\sqrt{\frac{\log(2/\delta)}{m}}
>
\frac14\sigma\sqrt{\frac{\log(2/\delta)}{m}}.
\]
Thus on $\mathsf{A}$,
\[
|\hat\mu_u-\mu|>
L\sigma\sqrt{\frac{\log(2/\delta)}{m}},
\qquad L=\frac14
\]
.
\end{proof}

\subsection{Proof of Theorem~\ref{thm:pooled-mean-impossibility}}

\begin{proof}
Fix $\sigma^2>0$, a deterministic rule
$\eps:\mathbb{N}\times(0,\tfrac14)\to(0,\tfrac12)$, and arbitrary $m\ge1$,
$\delta\in(0,\tfrac14)$. Let
\[
\tau_{m,\delta}=\frac18\,\sigma\sqrt{\frac{\log(2/\delta)}{m}}.
\]
Since $\eps\in(0,1/2)$, $t=\lfloor2m\eps\rfloor\in\{0,\dots,m-1\}$. We consider the following two regimes that include all the possible deterministic rules. In both regimes, we will show that there exists $P_0,P_1\in \mathcal{P}(\sigma^2)$ such that:
\[
\Prob\!\left(
\left|\hat\Delta^{(\mathrm{pool})}-\Delta\right|>\tau_{m,\delta}
\right)>\delta.
\]

\medskip\noindent\textit{Regime 1: weak winsorization.}
Assume
\[
t\le \log(2/\delta)+1.
\]
Apply Lemma~\ref{lem:pool-undertrim} with $u=t$, and let $P_1$ denote the distribution defined in Lemma~\ref{lem:pool-undertrim} with mean $\mu_1$.

Set
\[
\rho_0=\frac{1}{256}\sigma\sqrt{\frac{\log(2/\delta)}{m}},\qquad
P_0=\mathrm{Unif}[-D_0-\rho_0,-D_0+\rho_0],
\]
where $D_0$ is chosen large enough such that
\[
-D_0+\rho_0<\inf\mathrm{supp}(P_1).
\]
Then $\max_iY_{0,i}<\min_iY_{1,i}$ almost surely, so lower clipping never affects group~1.
Also $\mu_0=-D_0$ and therefore $\Delta=\mu_1-\mu_0=\mu_1+D_0$. Define the one-sided winsorized mean $\hat\mu_t=\frac1m\sum_{i=1}^m\min\{Y_{1,i},Y_{1,(m-t)}\}$,
by Lemma~\ref{lem:pool-index}:
\[
\frac1m\sum_{i=1}^m\phi_{\hat\alpha,\hat\beta}(Y_{1,i})-\mu_1
=
\hat\mu_t-\mu_1.
\]
Moreover,
\[
\left|
\frac1m\sum_{i=1}^m\phi_{\hat\alpha,\hat\beta}(Y_{0,i})-\mu_0
\right|
\le
\rho_0
=
\frac1{256}\sigma\sqrt{\frac{\log(2/\delta)}{m}}.
\]
If we set $L = 1/4$, applying Lemma \ref{lem:pool-undertrim}, we know that the following event has probability $>\delta$:
\[
\left\{|\hat\mu_t-\mu_1|>
L\sigma\sqrt{\frac{\log(2/\delta)}{m}}\right\}.
\]
Thus, on this event, by triangle inequality, we have
\[
\left|\hat\Delta^{(\mathrm{pool})}-\Delta\right|
\ge
\left(L-\frac1{256}\right)\sigma\sqrt{\frac{\log(2/\delta)}{m}}
>
\frac18\,\sigma\sqrt{\frac{\log(2/\delta)}{m}}
=
\tau_{m,\delta}.
\]
And this event has probability $>\delta$:
\[
\Prob\!\left(
\left|\hat\Delta^{(\mathrm{pool})}-\Delta\right|>\tau_{m,\delta}
\right)>\delta.
\]

\medskip\noindent\textit{Regime 2: strong winsorization.}
Assume
\[
t>\log(2/\delta)+1.
\]
Then $t\ge1$. Set
\[
p=\frac{t}{m}\in(0,1),\qquad
r=\min\!\left\{\frac{\sigma}{100},\frac{\sigma}{100}\sqrt{\frac{\log(2/\delta)}{m}}\right\},
\qquad
\sigma_-=\sqrt{\sigma^2-r^2/3},
\]
\[
M=\sigma_-\sqrt{\frac{p}{1-p}},\qquad
x_+=\frac{1-p}{p}M.
\]
Define
\[
P_1=(1-p)\,\mathrm{Unif}[-M-r,-M+r]+p\,\mathrm{Unif}[x_+-r,x_++r].
\]
Then $P_1$ is continuous, $\E(Y_{1,1})=0$, and
\[
\Var(Y_{1,1})=(1-p)M^2+px_+^2+\frac{r^2}{3}
=
\sigma_-^2+\frac{r^2}{3}
=
\sigma^2.
\]
Also, from $p>\log(2/\delta)/m$ and $\sigma_-\ge(99/100)\sigma$,
\[
M>\frac{99}{100}\sigma\sqrt{\frac{\log(2/\delta)}{m}}.
\]
Combining this with $r\le(\sigma/100)\sqrt{\log(2/\delta)/m}$ yields
\[
\frac{r}{M}
<
\frac{(\sigma/100)\sqrt{\log(2/\delta)/m}}
     {(99/100)\sigma\sqrt{\log(2/\delta)/m}}
=
\frac1{99}
\le
\frac18,
\]
so $r\le M/8$.

Let
\[
D=M+x_++4r,\qquad
P_0=\mathrm{Unif}[-D-r,-D+r].
\]
Then $P_0$ is continuous and $\Var(Y_{0,1})=r^2/3\le\sigma^2$, so
$P_0\in\mathcal{P}(\sigma^2)$. Since $D>M+2r$, Lemma~\ref{lem:pool-sep}
implies $\mathsf{Sep}$ holds almost surely.

Let $N_+ = \sum_{i=1}^m \mathbb{I}(Y_{1,i} \in [x_+-r, x_++r])$ be the number of group-1 observations falling into the upper-bump component; then $N_+\sim\mathrm{Bin}(m,p)$. Define
\[
\mathsf{E}=\{N_+\le t\}.
\]
Because $mp=t\in\mathbb Z$, Lemma~\ref{lem:pool-binom} gives
\[
\Prob(\mathsf{E})\ge\tfrac12>\delta.
\]
Similarly, define $N_-=m-N_+$ to be the number of observations falling into the lower-bump component, on $\mathsf{E}$ and $\mathsf{Sep}$, Lemma~\ref{lem:pool-index} gives
$\hat\beta=Y_{1,(m-t)}$. Since $m-t \le m-N_+ = N_-$, the order statistic $Y_{1,(m-t)}$ lies in the
lower-bump component $[-M-r, -M+r]$, hence $\hat\beta\le -M+r$. Also
$\hat\alpha=Y_{0,(k)}\le -D+r<-M-r\le\min_iY_{1,i}$, so lower clipping does not act on group~1.
Therefore
\[
\frac1m\sum_{i=1}^m\phi_{\hat\alpha,\hat\beta}(Y_{1,i})\le -M+r,\qquad
\frac1m\sum_{i=1}^m\phi_{\hat\alpha,\hat\beta}(Y_{0,i})\ge -D-r.
\]
Hence, with $\Delta=\mu_1-\mu_0=0-(-D)=D$,
\[
\left|\hat\Delta^{(\mathrm{pool})}-\Delta\right|
\ge
M-2r
\ge
\frac34 M
\ge
\frac{297}{400}\sigma\sqrt{\frac{\log(2/\delta)}{m}}
>
\frac18\sigma\sqrt{\frac{\log(2/\delta)}{m}}
=
\tau_{m,\delta}.
\]
Thus, on $\mathsf{E}$,
\[
\left|\hat\Delta^{(\mathrm{pool})}-\Delta\right|>\tau_{m,\delta},
\]
and since $\Prob(\mathsf{E})>\delta$,
\[
\Prob\!\left(
\left|\hat\Delta^{(\mathrm{pool})}-\Delta\right|>\tau_{m,\delta}
\right)>\delta.
\]

Since Regimes 1 and 2 are exhaustive, we have completed the proof.
\end{proof}

\section{Proof of Theorem~\ref{thm:sep-winsor-mean} (Separate Winsorization)}
\label{app:proof-sep-winsor-main}
In this section, we present the detailed proof of
Theorem~\ref{thm:sep-winsor-mean}. To establish the finite-sample bound for the
proposed separate winsorization algorithm, we adopt the decomposition technique
for the winsorized sample mean from \cite{kock2025winsorized}. In
Section~\ref{app:sep-lemmas}, we first state the supporting lemmas needed for
the proof. We then present the main proof of
Theorem~\ref{thm:sep-winsor-mean} in
Section~\ref{sec:proof-sep-winsor-mean}. Finally, the detailed proofs of the
lemmas stated in Section~\ref{app:sep-lemmas} are provided in
Sections~\ref{app:proof-quantile-localization}--\ref{app:proof-quantile-mean}.
\subsection{Supporting Lemmas for Theorem~\ref{thm:sep-winsor-mean}}
\label{app:sep-lemmas}

This section collects the technical lemmas used to prove
Theorem~\ref{thm:sep-winsor-mean}. 

First, Lemma~\ref{lem:quantile-localization} establishes that the empirical winsorization cutoffs, $\hat{\alpha}$ and $\hat{\beta}$, center around their respective true quantiles, $\alpha$ and $\beta$, with high probability.
\begin{lemma}[Quantile localization]
\label{lem:quantile-localization}
Let $X_1,\dots,X_m$ be i.i.d. with c.d.f. $F \in \mathcal{P}(\sigma^2)$. Fix
$\varepsilon\in(0,1/4)$, $c>1$, $\delta\in(0,1)$,
and assume
\[
m\varepsilon
\;\ge\;
\Bigl(2c^2+\tfrac{8c}{3}\Bigr)\log(48/\delta).
\]
Let $\alpha=Q_\varepsilon$, $\beta=Q_{1-\varepsilon}$, and
$\hat\alpha=X_{(\lceil\varepsilon m\rceil)}$,
$\hat\beta=X_{(\lceil(1-\varepsilon)m\rceil)}$. Define
\[
\alpha^-=Q_{\varepsilon-\varepsilon/c},
\quad
\alpha^+=Q_{\varepsilon+\varepsilon/c},
\quad
\beta^-=Q_{1-\varepsilon-\varepsilon/c},
\quad
\beta^+=Q_{1-\varepsilon+\varepsilon/c}.
\]
Then
\begin{equation}
\label{eq:quantile-localization}
\Prob\!\left(
  \alpha^-\le\hat\alpha\le\alpha^+,
  \;\beta^-\le\hat\beta\le\beta^+
\right)
\;\ge\;
1-\frac{\delta}{12}.
\end{equation}
\end{lemma}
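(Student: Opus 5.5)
The plan is to reduce each of the four one-sided events to a binomial tail bound and then take a union bound. Since $F$ is continuous, $U_i=F(X_i)$ are i.i.d.\ $\mathrm{Unif}(0,1)$, and $X_{(j)}\le Q_p$ iff at least $j$ of the $X_i$ are $\le Q_p$, i.e.\ iff $\mathrm{Bin}(m,p)\ge j$. So we translate each failure event into a count. Let $N(p)=\sum_i \1\{X_i\le Q_p\}\sim\mathrm{Bin}(m,p)$. Then
\[
\{\hat\alpha>\alpha^+\}=\{N(\varepsilon+\varepsilon/c)<\lceil\varepsilon m\rceil\},\qquad
\{\hat\alpha<\alpha^-\}\subseteq\{N(\varepsilon-\varepsilon/c)\ge\lceil\varepsilon m\rceil\},
\]
and analogously for $\hat\beta$ with $p=1-\varepsilon\pm\varepsilon/c$. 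For the $\beta$ side it is cleaner to count exceedances $m-N(p)\sim\mathrm{Bin}(m,1-p)$, whose means are $m(\varepsilon\mp\varepsilon/c)$. All four events then become deviations of size roughly $m\varepsilon/c$ for a binomial with mean $m(\varepsilon\pm\varepsilon/c)$, up to a ceiling/rounding error of at most $1$.

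Each of the four events will be bounded by Bernstein's inequality for binomials. The variance satisfies $\mathrm{Var}\le m p(1-p)\le m\varepsilon(1+1/c)\le 2m\varepsilon$, and the range is $1$. This gives tails of the form
\[
\exp\!\Big(-\frac{t^2}{2(2m\varepsilon)+\tfrac{2}{3}t}\Big),\qquad t\approx m\varepsilon/c .
\]
Plugging $t=m\varepsilon/c$, the exponent is $\dfrac{m\varepsilon}{c^2\bigl(4+\tfrac{2}{3c}\bigr)}$, or more precisely $m\varepsilon/(2c^2+\tfrac{2c}{3})$-type constants once one uses the exact variance $m(\varepsilon\pm\varepsilon/c)$ rather than the crude $2m\varepsilon$. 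The stated constant $A_c=2c^2+\tfrac{8c}{3}$ looks designed to absorb both the rounding slack and the linear Bernstein term. The condition $m\varepsilon\ge A_c\log(48/\delta)$ then makes each tail at most $\delta/48$, and the union bound over four events gives $\delta/12$.

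\textbf{Main obstacle.} The real work is bookkeeping the constants so that the exponent is at least $\log(48/\delta)$. There are two delicate pieces. First, the integer rounding: $\lceil\varepsilon m\rceil$ versus $\varepsilon m$ shifts the deviation $t$ by up to $1$. Second, the asymmetric variances: mean $\varepsilon(1+1/c)$ on the upper side versus $\varepsilon(1-1/c)$ on the lower side.

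I would first try the multiplicative Chernoff/Bernstein forms directly. For the lower tail of $\mathrm{Bin}(m,p_+)$ with $p_+=\varepsilon(1+1/c)$, I would use $\exp(-t^2/(2mp_+))$ with $t=m\varepsilon/c-1$. For the upper tail of $\mathrm{Bin}(m,p_-)$, I would use Bernstein with $t=m\varepsilon/c$. I would then verify that $A_c$ suffices, using $m\varepsilon\ge A_c\log 48\gg c$ to absorb the $-1$ via a factor like $(1-c/(m\varepsilon))$. If the constants do not close exactly, the fallback is to bound $mp(1-p)\le m\varepsilon(1+1/c)$ uniformly and solve the Bernstein quadratic $t^2\ge 2v u+\tfrac{2}{3}tu$ with $u=\log(48/\delta)$. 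This yields the requirement $t\ge \sqrt{2vu}+\tfrac{2}{3}u$. That requirement reduces to exactly a condition of the form $m\varepsilon\ge (2c^2+\tfrac{8c}{3})u$ after using $t=m\varepsilon/c$, which is presumably where $A_c$ comes from.
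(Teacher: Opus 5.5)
Your plan is essentially the paper's proof: the paper also reduces $\{\hat\alpha<\alpha^-\}$, $\{\hat\alpha>\alpha^+\}$, $\{\hat\beta<\beta^-\}$ and $\{\hat\beta>\beta^+\}$ to deviations of size $\varepsilon/c$ of the empirical c.d.f.\ (or exceedance frequency) at the shifted quantiles, applies Bernstein with variance proxy at most $\varepsilon(1+1/c)$ to get $e^{-u}=\delta/48$ for each, and takes a union bound, with $A_c=2c^2+\tfrac{8c}{3}$ arising exactly as in your fallback. The rounding slack you worry about never occurs, because $\lceil\varepsilon m\rceil\ge\varepsilon m$ and $\lceil\varepsilon m\rceil-1<\varepsilon m$ (likewise for $\lceil(1-\varepsilon)m\rceil$) make each event imply a deviation of at least $m\varepsilon/c$ with no loss of $1$; this is what justifies plugging $t=m\varepsilon/c$ directly into $t^2\ge 2vu+\tfrac23 tu$, and you should do that substitution directly, since the intermediate sufficient form $t\ge\sqrt{2vu}+\tfrac23 u$ is slightly lossy and does not reproduce $A_c$ exactly.
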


Building on Lemma~\ref{lem:quantile-localization}, Lemma~\ref{lem:threshold-perturbation} establishes that the difference between the empirical winsorized means evaluated at the empirical and population thresholds, $\bigl|\hat\mu_{\hat\alpha,\hat\beta}-\hat\mu_{\alpha,\beta}\bigr|$, is small with high probability.

\begin{lemma}[Threshold perturbation bound]
\label{lem:threshold-perturbation}
Under the same setup as Lemma~\ref{lem:quantile-localization}, there exists
$L_1>0$ depending only on $c$ such that
\begin{equation}
\label{eq:threshold-perturbation}
\Prob\!\left(
  \bigl|\hat\mu_{\hat\alpha,\hat\beta}-\hat\mu_{\alpha,\beta}\bigr|
  > L_1\,\sigma\,\sqrt{\varepsilon}
\right)
\;\le\;
\frac{\delta}{8},
\end{equation}
where $\hat\mu_{\hat\alpha,\hat\beta}$ and $\hat\mu_{\alpha,\beta}$ are the
empirical winsorized means formed with empirical and population thresholds.
\end{lemma}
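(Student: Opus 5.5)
We work on the event $\mathcal{G}=\{\alpha^-\le\hat\alpha\le\alpha^+,\ \beta^-\le\hat\beta\le\beta^+\}$ from Lemma~\ref{lem:quantile-localization}, which has probability at least $1-\delta/12$. On $\mathcal{G}$ the map $(a,b)\mapsto\phi_{a,b}(y)$ is $1$-Lipschitz in each argument, so
\[
\bigl|\phi_{\hat\alpha,\hat\beta}(X_i)-\phi_{\alpha,\beta}(X_i)\bigr|\le |\hat\alpha-\alpha|\,\1\{X_i\le \alpha^+\}+|\hat\beta-\beta|\,\1\{X_i\ge \beta^-\}.
\]
Indeed, a change in the lower cutoff only affects points below $\max\{\alpha,\hat\alpha\}\le\alpha^+$, and symmetrically for the upper cutoff. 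Averaging over $i$ gives
\[
|\hat\mu_{\hat\alpha,\hat\beta}-\hat\mu_{\alpha,\beta}|\le |\hat\alpha-\alpha|\,\hat F(\alpha^+)+|\hat\beta-\beta|\,(1-\hat F(\beta^--)).
\]
It therefore suffices to control two types of quantity: the empirical tail masses, and the cutoff errors $|\hat\alpha-\alpha|\le\alpha^+-\alpha^-$ and $|\hat\beta-\beta|\le\beta^+-\beta^-$.

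\emph{Tail masses.} $m\hat F(\alpha^+)\sim\mathrm{Bin}(m,\varepsilon(1+1/c))$. Bernstein's inequality together with $m\varepsilon\ge A_c\log(48/\delta)$ (the same condition used in the localization lemma) gives $\hat F(\alpha^+)\le C_c\varepsilon$ with probability at least $1-\delta/48$. The analogous bound holds for the upper tail.

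\emph{Quantile spreads.} This is the main obstacle, since $Q_{\varepsilon+\varepsilon/c}-Q_{\varepsilon-\varepsilon/c}$ need not be small pointwise. What we need is $\varepsilon\cdot(\alpha^+-\alpha^-)\lesssim\sigma\sqrt{\varepsilon}$. Use Chebyshev-type quantile bounds. For $p\le 1/2$, $\mu-Q_p\le\sigma/\sqrt{p}$ because $\Prob(X\le Q_p)=p$ and $\Prob(X-\mu\le -t)\le\sigma^2/t^2$; similarly $Q_p-\mu\le\sigma/\sqrt{p}$ once $p\le1/2$, since the median lies within $\sigma$ of $\mu$. Hence
\[
\alpha^+-\alpha^-\le (\mu-\alpha^-)+(\alpha^+-\mu)\le \frac{\sigma}{\sqrt{\varepsilon(1-1/c)}}+\sigma\sqrt2,
\]
where the last term uses $\varepsilon+\varepsilon/c\le1/2$ (valid since $\varepsilon\le1/4$). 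Multiplying by $C_c\varepsilon$ gives
\[
C_c\,\sigma\Bigl(\sqrt{\varepsilon/(1-1/c)}+\sqrt2\,\varepsilon\Bigr)\le L_1'\sigma\sqrt\varepsilon,
\]
and the upper tail is handled symmetrically.

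If the crude bound on $\alpha^+-\mu$ is worried about, a sharper route is to bound $\varepsilon(\alpha^+-\alpha^-)$ via
\[
\int_{\varepsilon-\varepsilon/c}^{\varepsilon+\varepsilon/c}|Q_p-\mu|\,dp\le\sqrt{2\varepsilon/c}\;\sigma
\]
by Cauchy--Schwarz, combined with monotonicity of $Q$. The Chebyshev route above already suffices, however.

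Collecting the failure probabilities, $\delta/12+2\cdot\delta/48\le\delta/8$, and we set $L_1=2L_1'$, which depends only on $c$.
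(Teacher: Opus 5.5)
Your proposal is correct and follows essentially the same route as the paper. It intersects the localization event with a Bernstein tail-mass event (total failure $\delta/12+\delta/24=\delta/8$), uses the indicator bound $|\hat\alpha-\alpha|F_m(\alpha^+)+|\hat\beta-\beta|\{1-F_m(\beta^-)\}$, and controls the spreads $\alpha^+-\alpha^-$ and $\beta^+-\beta^-$ with the Chebyshev quantile--mean bound.

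The differences from the paper are cosmetic. Your $\sqrt{2}\sigma$ bound on $\alpha^+-\mu$ via the median, in place of the paper's $\sigma/\sqrt{\varepsilon(1+1/c)}$, only changes $L_1$. You should, however, drop the integral aside: $\int_{\varepsilon-\varepsilon/c}^{\varepsilon+\varepsilon/c}|Q_p-\mu|\,dp$ can be arbitrarily small (e.g.\ $F$ flat just above $\alpha^-$ with $\mu$ near $\alpha^+$) while $\varepsilon(\alpha^+-\alpha^-)$ is of order $\sigma\sqrt{\varepsilon}$. It is the Chebyshev bound on $\mu-\alpha^-$ that does the work.
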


Next, Lemma~\ref{lem:winsor-concentration} demonstrates that under the same winsorization scheme, the empirical winsorized mean evaluated at the population thresholds concentrates around its expected value with high probability.

\begin{lemma}[Winsorized mean concentration]
\label{lem:winsor-concentration}
Let $X_1,\dots,X_m$ be i.i.d. with mean $\mu$ and
$\Var(X_1)\le\sigma^2$. Let $\alpha=Q_\varepsilon$, $\beta=Q_{1-\varepsilon}$,
$\hat\mu_{\alpha,\beta}=m^{-1}\sum_{i=1}^m\phi_{\alpha,\beta}(X_i)$, and
$\mu_{\alpha,\beta}=\E\{\phi_{\alpha,\beta}(X_1)\}$. If
\[
m\varepsilon
\;\ge\;
\Bigl(2c^2+\tfrac{8c}{3}\Bigr)\log(48/\delta),
\]
then for any $\delta\in(0,1)$,
\begin{equation}
\label{eq:winsor-concentration}
\Prob\!\left(
  \bigl|\hat\mu_{\alpha,\beta}-\mu_{\alpha,\beta}\bigr|
  > L_2\,\sigma\,\sqrt{\frac{\log(48/\delta)}{m}}
\right)
\;\le\;
\frac{\delta}{24},
\end{equation}
for a constant $L_2>0$ depending only on $c$.
\end{lemma}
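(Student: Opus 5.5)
The plan is to apply Bernstein's inequality to the bounded i.i.d. variables $W_i=\phi_{\alpha,\beta}(X_i)$, with deterministic thresholds $\alpha=Q_\varepsilon$ and $\beta=Q_{1-\varepsilon}$. Two inputs are needed. The first is the variance: since $\phi_{\alpha,\beta}$ is $1$-Lipschitz (a contraction), $\Var(W_1)\le\Var(X_1)\le\sigma^2$. One way to see this is to write $\Var(W)=\tfrac12\E(W-W')^2\le\tfrac12\E(X-X')^2$ with $X'$ an independent copy of $X$.

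The second input is the range. We have $|W_i-\mu_{\alpha,\beta}|\le\beta-\alpha$, and we bound $\beta-\alpha$ through Chebyshev. Since $F$ is continuous, $\Prob(X\ge\beta)=\varepsilon$, so if $\beta>\mu$ then $\varepsilon\le\Prob(X-\mu\ge\beta-\mu)\le\sigma^2/(\beta-\mu)^2$. This gives $\beta-\mu\le\sigma/\sqrt{\varepsilon}$, and by symmetry $\mu-\alpha\le\sigma/\sqrt{\varepsilon}$. If $\beta\le\mu$ the upper bound is trivial. Hence $b:=\beta-\alpha\le2\sigma/\sqrt{\varepsilon}$.

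Bernstein's inequality then gives, with $u=\log(48/\delta)$, a probability at most $2e^{-u}=\delta/24$ that
\[
\bigl|\hat\mu_{\alpha,\beta}-\mu_{\alpha,\beta}\bigr|>\sigma\sqrt{\frac{2u}{m}}+\frac{2bu}{3m}.
\]
The factor $2$ in front of $b$ is a safe convention covering the range $[-b,b]$ of the centered variables. I also need $2e^{-u}\le\delta/24$. With $u=\log(48/\delta)$ we get exactly $2e^{-u}=2\delta/48=\delta/24$, so the logarithm is the right one.

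It remains to show that the linear term has the sub-Gaussian order. Plugging in $b\le2\sigma/\sqrt{\varepsilon}$ gives
\[
\frac{2bu}{3m}\le\frac{4\sigma u}{3m\sqrt{\varepsilon}}.
\]
The hypothesis $m\varepsilon\ge A_c u$ implies $\sqrt{\varepsilon}\ge\sqrt{A_c u/m}$. Therefore
\[
\frac{u}{m\sqrt{\varepsilon}}\le\frac{1}{\sqrt{A_c}}\sqrt{\frac{u}{m}}.
\]
Combining the two terms yields the bound with
\[
L_2=\sqrt2+\frac{4}{3\sqrt{A_c}},
\]
which depends only on $c$ (indeed it is bounded by a universal constant, since $A_c>2$).

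The main obstacle is the range bound, specifically that the truncation width must scale as $\sigma/\sqrt{\varepsilon}$. That scaling comes from the Chebyshev step together with continuity of $F$, which guarantees the exact tail mass $\varepsilon$ at each quantile. I will handle the edge cases $\beta\le\mu$ and $\alpha\ge\mu$ separately, where the corresponding one-sided bound holds trivially.

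The lemma as stated only assumes $\Var(X_1)\le\sigma^2$. If $F$ may be discontinuous, I would instead use $\Prob(X\ge\beta)\ge\varepsilon$, which holds for the quantile $Q_{1-\varepsilon}$ defined as $\inf\{x:F(x)\ge 1-\varepsilon\}$. The same Chebyshev argument then goes through unchanged.
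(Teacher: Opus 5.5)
Your proposal is correct and follows essentially the same route as the paper: Bernstein's inequality for the centered winsorized variables, with variance at most $\sigma^2$ by the contraction property and range $\beta-\alpha\le 2\sigma/\sqrt{\varepsilon}$ (the paper cites its Chebyshev-based quantile--mean lemma, which you reprove inline), followed by $m\varepsilon\ge A_c u$ to absorb the linear term. The only difference is bookkeeping. The paper uses the more conservative radius $2\sigma\sqrt{u/m}+4Mu/(3m)$ and gets $L_2=2+8/(3\sqrt{A_c})$. Your radius $\sigma\sqrt{2u/m}+2bu/(3m)$ is also a valid inversion of the tail bound $\exp\{-t^2/(2(v+Mt/3))\}$ and gives the slightly smaller $L_2=\sqrt2+4/(3\sqrt{A_c})$. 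Note that the factor $2$ is what that inversion requires, not a convention about the range $[-b,b]$.
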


Next, Lemma~\ref{lem:winsor-bias} provides a deterministic bound on the winsorization bias, guaranteeing that the systematic error introduced by clipping the tails scales at most with the square root of the winsorization level.

\begin{lemma}[Winsorization bias bound]
\label{lem:winsor-bias}
Under the same setup as Lemma~\ref{lem:winsor-concentration},
\begin{equation}
\label{eq:winsor-bias}
\bigl|\mu_{\alpha,\beta}-\mu\bigr|
\;\le\;
4\,\sigma\,\sqrt{\varepsilon}.
\end{equation}
\end{lemma}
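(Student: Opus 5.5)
We must prove Lemma (winsorization bias): $|\mu_{\alpha,\beta}-\mu|\le 4\sigma\sqrt{\varepsilon}$, where $\alpha=Q_\varepsilon$, $\beta=Q_{1-\varepsilon}$, and $\varepsilon<1/4$.

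The plan is to write the bias as two one-sided tail integrals and control each by Cauchy--Schwarz together with Chebyshev's inequality. Since $\phi_{\alpha,\beta}(x)-x=(\alpha-x)\1\{x<\alpha\}-(x-\beta)\1\{x>\beta\}$, we have
\[
\mu_{\alpha,\beta}-\mu=\E\{(\alpha-X)_+\}-\E\{(X-\beta)_+\},
\]
so it suffices to show that each term is at most $2\sigma\sqrt{\varepsilon}$. The two tails are symmetric (replace $X$ by $-X$), so I treat only the upper one.

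\emph{Step 1: locate the quantile relative to the mean.} Because $F$ is continuous, $\Prob(X>\beta)=\varepsilon$ and $\Prob(X<\alpha)=\varepsilon$. Chebyshev gives $\Prob(X-\mu\ge t)\le \sigma^2/t^2$. Hence if $\beta<\mu$, then $\Prob(X>\beta)\ge \Prob(X\ge\mu)$. This does not immediately give a contradiction, so I instead use Cantelli's inequality:
\[
\Prob(X-\mu\le -s)\le \frac{\sigma^2}{\sigma^2+s^2}.
\]
Since $\Prob(X\le\beta)=1-\varepsilon\ge 3/4$, applying Cantelli with $s=\mu-\beta$ (when positive) gives $\sigma^2/(\sigma^2+s^2)\ge 3/4$, so $\mu-\beta\le\sigma/\sqrt3$. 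In general $\beta\ge\mu-\sigma/\sqrt3$.

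\emph{Step 2: Cauchy--Schwarz on the tail.} Write
\[
\E\{(X-\beta)_+\}=\E\{(X-\beta)\1\{X>\beta\}\}\le \E\{(X-\mu)\1\{X>\beta\}\}+(\mu-\beta)_+\,\varepsilon .
\]
The first term is at most $\sqrt{\E(X-\mu)^2}\sqrt{\Prob(X>\beta)}\le\sigma\sqrt\varepsilon$. The second term is at most $(\sigma/\sqrt3)\varepsilon\le\sigma\sqrt\varepsilon/\sqrt3$, using $\varepsilon\le\sqrt\varepsilon$. So each tail contributes at most $(1+1/\sqrt3)\sigma\sqrt\varepsilon<2\sigma\sqrt\varepsilon$. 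The lower tail is handled identically with $\alpha\le\mu+\sigma/\sqrt3$.

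\emph{Step 3: combine.} By the triangle inequality, $|\mu_{\alpha,\beta}-\mu|\le 2(1+1/\sqrt3)\sigma\sqrt\varepsilon\le4\sigma\sqrt\varepsilon$.

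The only delicate point is the sign of $\mu-\beta$ in Step 2, which is why Step 1 is needed. An even simpler alternative: when $\beta\ge\mu$, we have $(X-\beta)_+\le (X-\mu)\1\{X>\beta\}$ directly. Finiteness of the variance guarantees that all the expectations involved exist.
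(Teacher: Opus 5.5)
Your proof is correct. Its skeleton is the same as the paper's: you split the bias into the two tail expectations and apply Cauchy--Schwarz to get $\sigma\sqrt{\varepsilon}$ per tail. The two arguments differ only in how the displacement term is handled.

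The paper uses the absolute bound $|X-\alpha|\le|X-\mu|+|\mu-\alpha|$ together with the two-sided quantile--mean bound of Lemma~\ref{lem:quantile-mean-bound}, $|\mu-\alpha|\le\sigma/\sqrt{\varepsilon}$. The displacement then contributes $(\sigma/\sqrt{\varepsilon})\cdot\varepsilon=\sigma\sqrt{\varepsilon}$ per tail, and the constant $4$ comes out exactly.

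You instead keep the sign. The term $(\mu-\beta)\Prob(X>\beta)$ only hurts when the upper quantile lies below the mean, and that wrong-side displacement is $O(\sigma)$ rather than $O(\sigma/\sqrt{\varepsilon})$. This gives a better constant, $2(1+1/\sqrt{3})\approx 3.15$. If you do not plug in $\varepsilon\le 1/4$, your Cantelli computation actually gives $(\mu-\beta)_+\le\sigma\sqrt{\varepsilon/(1-\varepsilon)}$. The displacement term is then only $O(\sigma\varepsilon^{3/2})$, the bias is at most $2\sigma\sqrt{\varepsilon}\,(1+O(\varepsilon))$, and the argument works for any $\varepsilon<1/2$.

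The paper's route is more economical within the paper, because Lemma~\ref{lem:quantile-mean-bound} is needed anyway for the threshold-perturbation and concentration lemmas.

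Cantelli is not actually required. The Chebyshev line you abandoned works if you apply it to the event $\{X\le\beta\}$ rather than $\{X>\beta\}$: $1-\varepsilon\le\Prob(X\le\beta)\le\Prob(|X-\mu|\ge\mu-\beta)\le\sigma^2/(\mu-\beta)^2$ gives $\mu-\beta\le\sigma/\sqrt{1-\varepsilon}$. This is Case~(1) of the paper's quantile--mean proof with $p=1-\varepsilon$, and it already yields the stated bound for all $\varepsilon\le 1/2$.
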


Finally, Lemma~\ref{lem:quantile-mean-bound} provides a universal bound on the distance between a population quantile $Q_p$ and the mean $\mu$. This result plays a crucial role in designing our proposed winsorization level.

\begin{lemma}[Quantile--mean distance]
\label{lem:quantile-mean-bound}
Let $X$ be a real-valued random variable with mean $\mu$ and variance
$\sigma^2<\infty$. For $p\in(0,1)$,
\begin{equation}
\label{eq:quantile-mean-bound}
|Q_p-\mu|
\;\le\;
\frac{\sigma}{\sqrt{\min\{p,1-p\}}}.
\end{equation}
\end{lemma}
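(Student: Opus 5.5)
We prove Lemma~\ref{lem:quantile-mean-bound} with a one-sided Chebyshev (Cantelli) argument, treating the cases $Q_p>\mu$ and $Q_p<\mu$ separately.

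\emph{Case $Q_p>\mu$.} Set $t=Q_p-\mu>0$. By the defining property of the quantile (for continuous $F$, $F(Q_p)=p$; in general, $\Prob(X\ge Q_p)\ge 1-p$),
\[
1-p\le \Prob(X-\mu\ge t).
\]
Cantelli's inequality gives $\Prob(X-\mu\ge t)\le \sigma^2/(\sigma^2+t^2)$. Combining the two,
\[
t^2\le \sigma^2\,\frac{p}{1-p}.
\]
Since $p\le 1$, this yields $t\le \sigma/\sqrt{1-p}$. Using plain Chebyshev, $\Prob(|X-\mu|\ge t)\le\sigma^2/t^2$, gives $t\le\sigma/\sqrt{1-p}$ directly.

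\emph{Case $Q_p<\mu$.} Set $t=\mu-Q_p>0$. Now $\Prob(X\le Q_p)\ge p$. By the same reasoning,
\[
p\le \Prob(\mu-X\ge t)\le \frac{\sigma^2}{\sigma^2+t^2},
\]
so $t\le \sigma/\sqrt{p}$. The case $Q_p=\mu$ is trivial.

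\emph{Combining the cases.} Taking the worse of the two cases gives $|Q_p-\mu|\le \sigma/\sqrt{\min\{p,1-p\}}$, which is \eqref{eq:quantile-mean-bound}.

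The only delicate point is the quantile convention for general, possibly discontinuous, distributions. We take $Q_p=\inf\{x:F(x)\ge p\}$. This gives $F(Q_p)\ge p$, and it also gives $\Prob(X\ge Q_p)\ge 1-p$, because $F(x)<p$ for all $x<Q_p$ and hence $\Prob(X<Q_p)\le p$. These are precisely the two tail facts used in the two cases. Within the paper's class $\mathcal P(\sigma^2)$ the distributions are continuous, so both facts hold immediately, and the argument above is essentially the whole proof.
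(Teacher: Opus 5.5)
Your proof is correct and follows the paper's argument. With $Q_p=\inf\{x:F(x)\ge p\}$ you use $\Prob(X\le Q_p)\ge p$ when $Q_p\le\mu$ and $\Prob(X\ge Q_p)\ge 1-p$ when $Q_p\ge\mu$, then bound the tail by Chebyshev, exactly as the paper does; your Cantelli variant is a valid refinement that gives the sharper bounds $\sigma\sqrt{(1-p)/p}$ and $\sigma\sqrt{p/(1-p)}$, but it is not needed.
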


\subsection{Proof of Theorem~\ref{thm:sep-winsor-mean} }\label{sec:proof-sep-winsor-mean}
\begin{proof}
Set $u_\delta=\log(48/\delta)$ and $A_c=2c^2+\tfrac{8c}{3}$, so
$\varepsilon=A_cu_\delta/n_{\min}$. To derive the concentration bound for $|\hat{\Delta}^{(\mathrm{sep})} - \Delta|$, we will first consider the winsorized sample mean $\hat{\mu}_g^{(\mathrm{sep})}$. For each group $g\in\{0,1\}$, define population thresholds
$\alpha_g=Q_\varepsilon^{(g)}$ and $\beta_g=Q_{1-\varepsilon}^{(g)}$. By the
triangle inequality,
\[
\bigl|\hat\mu_g^{(\mathrm{sep})}-\mu_g\bigr|
\le
\underbrace{\bigl|\hat\mu_{\hat\alpha_g,\hat\beta_g}-\hat\mu_{\alpha_g,\beta_g}\bigr|}_{\text{(I)}}
+
\underbrace{\bigl|\hat\mu_{\alpha_g,\beta_g}-\mu_{\alpha_g,\beta_g}\bigr|}_{\text{(II)}}
+
\underbrace{\bigl|\mu_{\alpha_g,\beta_g}-\mu_g\bigr|}_{\text{(III)}}.
\]
Lemma~\ref{lem:threshold-perturbation} gives, with failure
probability at most $\delta/8$,
\[
\text{(I)}\le L_1\sigma\sqrt{\varepsilon}
= L_1\sqrt{A_c}\,\sigma\sqrt{\frac{u_\delta}{n_{\min}}}.
\]
Lemma~\ref{lem:winsor-concentration} gives, with failure probability at most
$\delta/24$,
\[
\text{(II)}\le L_2\sigma\sqrt{\frac{u_\delta}{n_g}}
\le L_2\sigma\sqrt{\frac{u_\delta}{n_{\min}}}.
\]
Lemma~\ref{lem:winsor-bias} gives deterministically
\[
\text{(III)}\le 4\sigma\sqrt{\varepsilon}
= 4\sqrt{A_c}\,\sigma\sqrt{\frac{u_\delta}{n_{\min}}}.
\]
Hence, with probability at least $1-(\delta/8+\delta/24)=1-\delta/6$,
\begin{equation}
\label{eq:group-bound}
\bigl|\hat\mu_g^{(\mathrm{sep})}-\mu_g\bigr|
\le
C_{\mathrm{grp}}\,\sigma\sqrt{\frac{\log(48/\delta)}{n_{\min}}},
\end{equation}
where
\[
C_{\mathrm{grp}}=L_2+(L_1+4)\sqrt{A_c}.
\]

Then, applying a union bound over $g=0,1$ yields an event of probability at least
$1-2(\delta/6)=1-\delta/3$ on which \eqref{eq:group-bound} holds for both
groups. Thus, on this event, we have
\[
\bigl|\hat\Delta^{(\mathrm{sep})}-\Delta\bigr|
\le
\sum_{g=0}^1\bigl|\hat\mu_g^{(\mathrm{sep})}-\mu_g\bigr|
\le
2C_{\mathrm{grp}}\,\sigma\sqrt{\frac{\log(48/\delta)}{n_{\min}}}.
\]
Set $L_{\mathrm{mean}}=2C_{\mathrm{grp}}$. Since $1-\delta/3\ge1-\delta$, this
proves Theorem~\ref{thm:sep-winsor-mean}.
\end{proof}

\subsection{Proof of Lemma~\ref{lem:quantile-localization} (Quantile Localization)}
\label{app:proof-quantile-localization}

Let $u=\log(48/\delta)$ and $A_c=2c^2+\tfrac{8c}{3}$. By assumption,
$m\varepsilon\ge A_cu$. By continuity of $F$,
\[
F(\alpha^-)=\varepsilon-\varepsilon/c,
\quad
F(\alpha^+)=\varepsilon+\varepsilon/c,
\quad
\Prob(X>\beta^-)=\varepsilon+\varepsilon/c,
\quad
\Prob(X>\beta^+)=\varepsilon-\varepsilon/c.
\]
Set $k=\lceil\varepsilon m\rceil$ and $\ell=\lceil(1-\varepsilon)m\rceil$.

Define events
\[
E_1=\{\hat\alpha<\alpha^-\},\ 
E_2=\{\hat\alpha>\alpha^+\},\ 
E_3=\{\hat\beta<\beta^-\},\ 
E_4=\{\hat\beta>\beta^+\}.
\]
We show $\Prob(E_j)\le\delta/48$ for $j=1,\dots,4$.

For $E_1$, with $W_i^-=\1\{X_i\le\alpha^-\}$ and
$p_-=\E(W_i^-)=\varepsilon-\varepsilon/c$,
\[
E_1=\{X_{(k)}<\alpha^-\}\subseteq\{F_m(\alpha^-)\ge\varepsilon\}
\subseteq\{F_m(\alpha^-)-p_-\ge\varepsilon/c\}.
\]
Applying Bernstein inequality (See Lemma~\ref{lem:bernstein-app}) gives
\[
\Prob(E_1)
\le
\exp\!\left(-\frac{m(\varepsilon/c)^2}{2p_-+\tfrac{2}{3}(\varepsilon/c)}\right)
\le
\exp\!\left(-\frac{m(\varepsilon/c)^2}{2\varepsilon+\tfrac{8}{3}(\varepsilon/c)}\right)
\le e^{-u}=\frac{\delta}{48},
\]
where the last step uses $m\varepsilon\ge A_cu$.

For $E_2$, let $p_+=F(\alpha^+)=\varepsilon+\varepsilon/c$. Since
$X_{(k)}>\alpha^+$ implies $F_m(\alpha^+)\le (k-1)/m<\varepsilon$,
\[
E_2\subseteq\{p_+-F_m(\alpha^+)\ge\varepsilon/c\},
\]
and the same Bernstein bound gives $\Prob(E_2)\le\delta/48$.

For $E_3$, define $U_i^-=\1\{X_i>\beta^-\}$,
$r_-=\E(U_i^-)=\varepsilon+\varepsilon/c$. Since
$X_{(\ell)}<\beta^-$ implies $F_m(\beta^-)\ge \ell/m\ge1-\varepsilon$,
\[
E_3\subseteq\{r_--(1-F_m(\beta^-))\ge\varepsilon/c\},
\]
so $\Prob(E_3)\le\delta/48$.

For $E_4$, define $U_i^+=\1\{X_i>\beta^+\}$,
$r_+=\E(U_i^+)=\varepsilon-\varepsilon/c$. Since
$X_{(\ell)}>\beta^+$ implies $F_m(\beta^+)\le(\ell-1)/m<1-\varepsilon$,
\[
E_4\subseteq\{(1-F_m(\beta^+))-r_+\ge\varepsilon/c\},
\]
so $\Prob(E_4)\le\delta/48$.

A union bound yields
\[
\Prob\!\left(
\alpha^-\le\hat\alpha\le\alpha^+,
\ \beta^-\le\hat\beta\le\beta^+
\right)
\ge
1-4\cdot\frac{\delta}{48}
=
1-\frac{\delta}{12},
\]
which proves Lemma~\ref{lem:quantile-localization}. \qed

\begin{lemma}[Bernstein inequality]
\label{lem:bernstein-app}
Let $Z_1,\dots,Z_m$ be independent, mean-zero random variables with
$|Z_i|\le M$ almost surely, and let $v=\sum_{i=1}^m\E(Z_i^2)$. For all
$t>0$,
\[
\Prob\!\left(\sum_{i=1}^m Z_i\ge t\right)
\le
\exp\!\left(-\frac{t^2}{2(v+Mt/3)}\right).
\]
Consequently,
\[
\Prob\!\left(\left|\sum_{i=1}^m Z_i\right|\ge t\right)
\le
2\exp\!\left(-\frac{t^2}{2(v+Mt/3)}\right).
\]
\end{lemma}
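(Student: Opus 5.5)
The plan is the standard Chernoff (exponential Markov) argument, combined with a one-variable bound on the moment generating function of a bounded, mean-zero variable. Fix $\lambda\in(0,3/M)$ and write $\sigma_i^2=\E(Z_i^2)$. By Markov's inequality and independence,
\[
\Prob\!\left(\sum_{i=1}^m Z_i\ge t\right)\le e^{-\lambda t}\prod_{i=1}^m \E\bigl(e^{\lambda Z_i}\bigr).
\]
So everything reduces to bounding each factor $\E(e^{\lambda Z_i})$.

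The key step, and the only non-routine one, is the per-variable MGF bound. Expand $e^{\lambda Z_i}=1+\lambda Z_i+\sum_{k\ge2}\lambda^kZ_i^k/k!$ and use $\E Z_i=0$. Bound the higher moments by $|\E Z_i^k|\le M^{k-2}\sigma_i^2$, which holds because $|Z_i|\le M$. Then use the elementary inequality $k!\ge 2\cdot 3^{k-2}$ for $k\ge2$ to compare with a geometric series:
\[
\E\bigl(e^{\lambda Z_i}\bigr)\le 1+\frac{\lambda^2\sigma_i^2}{2}\sum_{k\ge2}\Bigl(\frac{\lambda M}{3}\Bigr)^{k-2}
=1+\frac{\lambda^2\sigma_i^2}{2(1-\lambda M/3)}
\le \exp\!\left(\frac{\lambda^2\sigma_i^2}{2(1-\lambda M/3)}\right).
\]
Multiplying over $i$ gives
\[
\Prob\!\left(\sum_i Z_i\ge t\right)\le \exp\!\left(-\lambda t+\frac{\lambda^2 v}{2(1-\lambda M/3)}\right).
\]
The care needed here is only in checking the moment inequality and the factorial comparison; the rest is bookkeeping.

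Next I choose $\lambda=t/(v+Mt/3)$. This choice satisfies $\lambda M/3<1$, so it is admissible, and it gives $1-\lambda M/3=v/(v+Mt/3)$. Substituting, the quadratic term becomes $\lambda^2(v+Mt/3)^2/(2v)\cdot v/(v+Mt/3)\cdot(v+Mt/3)^{-1}\cdot(v+Mt/3)$, which simplifies to $t^2/\{2(v+Mt/3)\}$. The linear term is $-t^2/(v+Mt/3)$. The exponent is therefore $-t^2/\{2(v+Mt/3)\}$, which is the one-sided claim. If $v=0$, then each $Z_i=0$ almost surely and the bound is trivial.

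Finally, for the two-sided bound, I apply the one-sided inequality to $-Z_1,\dots,-Z_m$. These satisfy the same hypotheses with the same $M$ and $v$. A union bound over the two tails then yields the factor $2$.
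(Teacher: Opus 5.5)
Your proof is correct. The paper gives no argument of its own; it simply cites Theorem~2.8.4 of Vershynin's book. Your Chernoff bound, combined with the moment-generating-function estimate $\E e^{\lambda Z_i}\le\exp\{\lambda^2\sigma_i^2/(2(1-\lambda M/3))\}$ for $0<\lambda<3/M$, is the standard argument for that result, and your separate treatment of $v=0$ (where $\lambda=t/(v+Mt/3)$ would be inadmissible) is right.

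The one thing to tidy is your simplification of the quadratic term. It reads more directly as $\lambda^2 v/\{2(1-\lambda M/3)\}=\lambda^2(v+Mt/3)/2=t^2/\{2(v+Mt/3)\}$.
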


\begin{proof}
See, for example, \citet[Theorem~2.8.4]{vershynin2009high}.
\end{proof}

\subsection{Proof of Lemma~\ref{lem:threshold-perturbation} (Threshold Perturbation)}
\label{app:proof-threshold-perturbation}

Let $u=\log(48/\delta)$ and $A_c=2c^2+\tfrac{8c}{3}$.

Define
\[
\mathcal E_{\mathrm{loc}}
=
\{\alpha^-\le\hat\alpha\le\alpha^+,\ \beta^-\le\hat\beta\le\beta^+\}.
\]
By Lemma~\ref{lem:quantile-localization},
$\Prob(\mathcal E_{\mathrm{loc}})\ge1-\delta/12$.

Define also
\[
\mathcal E_{\mathrm{tail}}
=
\left\{F_m(\alpha^+)\le\varepsilon+\frac{2\varepsilon}{c},
1-F_m(\beta^-)\le\varepsilon+\frac{2\varepsilon}{c}\right\}.
\]
Since $F(\alpha^+)=\varepsilon+\varepsilon/c$, applying the Bernstein inequality for
$\1\{X_i\le\alpha^+\}$ gives
\[
\Prob\!\left(F_m(\alpha^+)>\varepsilon+\frac{2\varepsilon}{c}\right)
\le
\exp\!\left(-\frac{m(\varepsilon/c)^2}{2(\varepsilon+\varepsilon/c)+\tfrac{2}{3}(\varepsilon/c)}\right)
\le e^{-u}=\frac{\delta}{48}.
\]
Likewise, because $\Prob(X>\beta^-)=\varepsilon+\varepsilon/c$, Bernstein inequality for
$\1\{X_i>\beta^-\}$ gives
\[
\Prob\!\left(1-F_m(\beta^-)>\varepsilon+\frac{2\varepsilon}{c}\right)
\le\frac{\delta}{48}.
\]
Hence $\Prob(\mathcal E_{\mathrm{tail}})\ge1-\delta/24$ and with
$\mathcal E=\mathcal E_{\mathrm{loc}}\cap\mathcal E_{\mathrm{tail}}$,
\[
\Prob(\mathcal E)\ge1-\frac{\delta}{8}.
\]

On $\mathcal E_{\mathrm{loc}}$, for any $x\in\R$,
\[
|\phi_{\hat\alpha,\hat\beta}(x)-\phi_{\alpha,\beta}(x)|
\le
|\hat\alpha-\alpha|\,\1\{x\le\max(\hat\alpha,\alpha)\}
+
|\hat\beta-\beta|\,\1\{x\ge\min(\hat\beta,\beta)\}.
\]
Averaging over $X_1,\dots,X_m$ and using the fact that
$\max\{\hat\alpha,\alpha\}\le\alpha^+$,
$\min\{\hat\beta,\beta\}\ge\beta^-$ on $\mathcal E_{\mathrm{loc}}$,
\[
|\hat\mu_{\hat\alpha,\hat\beta}-\hat\mu_{\alpha,\beta}|
\le
|\hat\alpha-\alpha|F_m(\alpha^+)
+
|\hat\beta-\beta|\{1-F_m(\beta^-)\}.
\]
On $\mathcal E_{\mathrm{tail}}$, both empirical tail factors are at most
$\varepsilon(1+2/c)$. On $\mathcal E_{\mathrm{loc}}$,
\[
|\hat\alpha-\alpha|\le|\alpha^+-\alpha^-|,
\qquad
|\hat\beta-\beta|\le|\beta^+-\beta^-|.
\]
By Lemma~\ref{lem:quantile-mean-bound},
\[
|\alpha^+-\alpha^-|\le\frac{2\sigma}{\sqrt{\varepsilon(1-1/c)}},
\qquad
|\beta^+-\beta^-|\le\frac{2\sigma}{\sqrt{\varepsilon(1-1/c)}}.
\]
Therefore, on $\mathcal E$,
\[
|\hat\mu_{\hat\alpha,\hat\beta}-\hat\mu_{\alpha,\beta}|
\le
\frac{4(1+2/c)}{\sqrt{1-1/c}}\,\sigma\sqrt{\varepsilon}
:=L_1\sigma\sqrt{\varepsilon},
\]
with $L_1$ depending only on $c$. Hence
\[
\Prob\!\left(
|\hat\mu_{\hat\alpha,\hat\beta}-\hat\mu_{\alpha,\beta}|
>
L_1\sigma\sqrt{\varepsilon}
\right)
\le\Prob(\mathcal E^c)
\le\frac{\delta}{8},
\]
which is \eqref{eq:threshold-perturbation}. \qed

\subsection{Proof of Lemma~\ref{lem:winsor-concentration} (Winsorized Mean Concentration)}
\label{app:proof-winsor-concentration}

Let
\[
Z_i=\phi_{\alpha,\beta}(X_i)-\mu_{\alpha,\beta},
\qquad
\bar Z=\frac{1}{m}\sum_{i=1}^m Z_i
=\hat\mu_{\alpha,\beta}-\mu_{\alpha,\beta}.
\]
Then $Z_1,\dots,Z_m$ are i.i.d., $\E(Z_i)=0$,
$\Var(Z_i)=\Var\{\phi_{\alpha,\beta}(X_i)\}\le\Var(X_i)\le\sigma^2$, and
$|Z_i|\le \beta-\alpha$. We define $M = \beta - \alpha$.
By Lemma~\ref{lem:quantile-mean-bound},
\[
|\alpha-\mu|\le\frac{\sigma}{\sqrt\varepsilon},
\qquad
|\beta-\mu|\le\frac{\sigma}{\sqrt\varepsilon},
\]
so $M\le 2\sigma/\sqrt\varepsilon$.

Apply the two-sided Bernstein bound from Lemma~\ref{lem:bernstein-app}:
\[
\Prob(|\bar Z|\ge t)
\le
2\exp\!\left(-\frac{mt^2}{2(\sigma^2+Mt/3)}\right),\qquad t>0.
\]
Set $u=\log(48/\delta)$ and
\[
t=2\sigma\sqrt{\frac{u}{m}}+\frac{4Mu}{3m}.
\]
Write $t=a+b$ with $a=2\sigma\sqrt{u/m}$ and $b=4Mu/(3m)$. Then
$t^2\ge a^2=4\sigma^2u/m$ and $t^2\ge bt=4Mut/(3m)$. Hence
\[
\frac{mt^2}{2}
\ge
2u\sigma^2+\frac{2uMt}{3},
\qquad\text{so}\qquad
\frac{mt^2}{2(\sigma^2+Mt/3)}\ge u,
\]
so
\[
\Prob(|\bar Z|\ge t)\le 2e^{-u}=\frac{\delta}{24}.
\]

It remains to bound $t$ by $\sigma\sqrt{u/m}$. Since
$m\varepsilon\ge A_cu$ with $A_c=2c^2+\tfrac{8c}{3}$,
\[
\frac{4Mu}{3m}
\le
\frac{8\sigma u}{3m\sqrt\varepsilon}
\le
\frac{8\sigma}{3\sqrt{A_c}}\sqrt{\frac{u}{m}}.
\]
Therefore
\[
t\le\left(2+\frac{8}{3\sqrt{A_c}}\right)\sigma\sqrt{\frac{u}{m}}
:=L_2\sigma\sqrt{\frac{u}{m}},
\]
with $L_2$ depending only on $c$. This proves
\eqref{eq:winsor-concentration}. \qed

\subsection{Proof of Lemma~\ref{lem:winsor-bias}}
\label{app:proof-winsor-bias}

Recall that $\alpha = Q_\varepsilon$, $\beta = Q_{1-\varepsilon}$, and
\[
\mu - \mu_{\alpha,\beta}
=
\E\bigl[(X-\alpha)\1\{X<\alpha\}\bigr]
+
\E\bigl[(X-\beta)\1\{X>\beta\}\bigr].
\]
Therefore
\[
|\mu-\mu_{\alpha,\beta}|
\le
\E\bigl[|X-\alpha|\1\{X<\alpha\}\bigr]
+
\E\bigl[|X-\beta|\1\{X>\beta\}\bigr].
\]
We would separately derive upper bound for the two terms.

\medskip
 For the first term, we
write $|X-\alpha| \le |X-\mu| + |\mu-\alpha|$, so
\[
\E\bigl[|X-\alpha|\1\{X<\alpha\}\bigr]
\le
\E\bigl[|X-\mu|\1\{X<\alpha\}\bigr]
+ |\mu-\alpha|\Prob(X<\alpha).
\]
By Cauchy--Schwarz,
\[
\E\bigl[|X-\mu|\1\{X<\alpha\}\bigr]
\le
\sigma\sqrt{\Prob(X<\alpha)}
\le
\sigma\sqrt{\varepsilon}.
\]
By Lemma~\ref{lem:quantile-mean-bound},
$|\mu-\alpha| \le \sigma/\sqrt{\varepsilon}$, and
$\Prob(X<\alpha)\le\varepsilon$, so
\[
|\mu-\alpha|\Prob(X<\alpha)
\le
\frac{\sigma}{\sqrt{\varepsilon}}\cdot\varepsilon
= \sigma\sqrt{\varepsilon}.
\]
Therefore $\E[|X-\alpha|\1\{X<\alpha\}] \le 2\sigma\sqrt{\varepsilon}$.

\medskip
For the second term, applying
the same argument with $\beta = Q_{1-\varepsilon}$ and
$\Prob(X>\beta)\le\varepsilon$ yields
$\E[|X-\beta|\1\{X>\beta\}] \le 2\sigma\sqrt{\varepsilon}$.

\medskip
\noindent Combining two terms together, 
\[
|\mu_{\alpha,\beta} - \mu|
\le
4\sigma\sqrt{\varepsilon},
\]
which is exactly \eqref{eq:winsor-bias}.


\subsection{Proof of Lemma~\ref{lem:quantile-mean-bound}}
\label{app:proof-quantile-mean}

Fix $p\in(0,1)$.
Use the definition of quantile $Q_p = \inf\{x\in\R:F(x)\ge p\}$.

\medskip
\noindent Case (1): If $Q_p \le \mu$,
set $t = \mu - Q_p \ge 0$. Then
\[
p \le \Prob(X\le Q_p) \le \Prob(X\le\mu-t) \le \Prob(|X-\mu|\ge t)
\le \frac{\sigma^2}{t^2},
\]
where the last step is Chebyshev's inequality. Hence $t \le \sigma/\sqrt{p}$.

\medskip
\noindent Case (2): If $Q_p \ge \mu$,
set $t = Q_p - \mu \ge 0$. Then
\[
1-p \le \Prob(X\ge Q_p) \le \Prob(X\ge\mu+t) \le \Prob(|X-\mu|\ge t)
\le \frac{\sigma^2}{t^2},
\]
so $t \le \sigma/\sqrt{1-p}$.

\medskip
\noindent Combining
$|Q_p - \mu| \le \sigma/\sqrt{\min\{p,1-p\}}$. \qed


\section{Proof of Theorem~\ref{thm:sep-winsor-multigroup}}
\label{app:proof-sep-winsor-multigroup}

\begin{proof}
The proof of Theorem~\ref{thm:sep-winsor-multigroup} follows as a direct
extension of Theorem~\ref{thm:sep-winsor-mean}. We therefore only sketch the key
steps. Set
\[
\bar\delta=\frac{\delta}{K},
\qquad
u_\delta=\log\!\left(\frac{48K}{\delta}\right)
=\log\!\left(\frac{48}{\bar\delta}\right),
\qquad
A_c=2c^2+\frac{8c}{3},
\]
and $\varepsilon=A_cu_\delta/n_{\min}$.

Step 1. Fix any group $g\in\{0,\dots,K-1\}$. Define
$\alpha_g=Q_\varepsilon^{(g)}$ and $\beta_g=Q_{1-\varepsilon}^{(g)}$. As in
Appendix~\ref{app:proof-sep-winsor-main},
\[
\bigl|\hat\mu_g^{(\mathrm{sep})}-\mu_g\bigr|
\le
\underbrace{\bigl|\hat\mu_{\hat\alpha_g,\hat\beta_g}-\hat\mu_{\alpha_g,\beta_g}\bigr|}_{\text{(I)}}
+
\underbrace{\bigl|\hat\mu_{\alpha_g,\beta_g}-\mu_{\alpha_g,\beta_g}\bigr|}_{\text{(II)}}
+
\underbrace{\bigl|\mu_{\alpha_g,\beta_g}-\mu_g\bigr|}_{\text{(III)}}.
\]
Therefore, applying Lemmas~\ref{lem:threshold-perturbation}--\ref{lem:winsor-concentration} to $\text{(I)}-\text{(II)}$, we have that,
\[
\text{(I)}
\le
L_1\sigma\sqrt{\varepsilon}
=
L_1\sqrt{A_c}\,\sigma\sqrt{\frac{u_\delta}{n_{\min}}},
\]
\[
\text{(II)}
\le
L_2\sigma\sqrt{\frac{u_\delta}{n_g}}
\le
L_2\sigma\sqrt{\frac{u_\delta}{n_{\min}}},
\]
with failure probability at
most $\bar\delta/8+\bar\delta/24=\bar\delta/6$. Then, applying Lemma~\ref{lem:winsor-bias} to \text{(III)}, we have that
\[
\text{(III)}
\le
4\sigma\sqrt{\varepsilon}
=
4\sqrt{A_c}\,\sigma\sqrt{\frac{u_\delta}{n_{\min}}}
\]
holds deterministically. Hence, for
$C_{\mathrm{grp}}=L_2+(L_1+4)\sqrt{A_c}$,
\begin{equation}
\label{eq:multigroup-group-bound}
\Prob\!\left(
  \bigl|\hat\mu_g^{(\mathrm{sep})}-\mu_g\bigr|
  \le
  C_{\mathrm{grp}}\,
  \sigma\sqrt{\frac{u_\delta}{n_{\min}}}
\right)
\ge
1-\frac{\bar\delta}{6}.
\end{equation}

Step 2. Apply a union bound to \eqref{eq:multigroup-group-bound} over all
$g=0,\dots,K-1$. Then, with probability at least
$1-K\bar\delta/6=1-\delta/6$, we have simultaneously
\[
\bigl|\hat\mu_g^{(\mathrm{sep})}-\mu_g\bigr|
\le
C_{\mathrm{grp}}\,
\sigma\sqrt{\frac{u_\delta}{n_{\min}}}
\quad
\text{for all }g.
\]

Step 3. On this event,
\[
\bigl|\hat\Delta^{(\mathrm{sep})}-\Delta\bigr|
=
\left|\sum_{g=0}^{K-1} c_g\bigl(\hat\mu_g^{(\mathrm{sep})}-\mu_g\bigr)\right|
\le
\sum_{g=0}^{K-1}|c_g|\,\bigl|\hat\mu_g^{(\mathrm{sep})}-\mu_g\bigr|
\le
K\,
C_{\mathrm{grp}}\,
\sigma\sqrt{\frac{u_\delta}{n_{\min}}}.
\]
The last inequality uses $|c_g|\le 1$ for all $g$, hence
$\sum_{g=0}^{K-1}|c_g|\le K$. Set $L=C_{\mathrm{grp}}$. Since
$1-\delta/6\ge1-\delta$,
\eqref{eq:multigroup-contrast-bound} follows.
\end{proof}

\section{Additional simulation studies}
\label{app:additional-simulation}

In this appendix, we further examine the sensitivity of the separate and pooled
winsorization estimators to the choice of the winsorization level $\varepsilon$. The
main simulation study uses the theoretically calibrated value of $\varepsilon$ from
Theorem~\ref{thm:sep-winsor-mean}. Here, we vary $\varepsilon$ from $0.005$ to $0.30$, and compare the behavior of the separate and pooled
winsorized estimators in terms of bias, RMSE,
and containment in the theoretical 95\% CLT interval for the naive mean-difference estimator.

Figure~\ref{fig:eps_balanced}
summarizes the results for the balanced group design. The left column of Figure~\ref{fig:eps_balanced}
shows that separate winsorization has negligible absolute bias across the entire range
of $\varepsilon$ values. In contrast, pooled winsorization becomes increasingly biased
as $\varepsilon$ grows, under all of the five distributions considered. The middle column of Figure~\ref{fig:eps_balanced} shows a similar pattern for RMSE. Separate
winsorization is stable across the range of $\varepsilon$ values. Pooled winsorization also performs reasonably when $\varepsilon$ is below 0.05, but its RMSE increases rapidly as $\varepsilon$ increases. The right column of Figure~\ref{fig:eps_balanced} further shows that separate winsorization generally maintains high containment, while pooled
winsorization can have substantially lower containment when the winsorization level
is above 0.05. Together, these results suggest that the empirical advantage
of separate winsorization is not specific to the theoretically calibrated choice of
$\varepsilon$. Finally, Figure~\ref{fig:eps_imbalanced} reports the corresponding results for the imbalanced group design, and the results are qualitatively  similar to those in the balanced group design. 


\begin{figure}[h]
\figuresize{.8}
\figurebox{20pc}{25pc}{}[Figures/fig_eps_balanced]
\caption{Sensitivity of absolute bias (the left column), RMSE (the middle column), and theoretical-CLT containment (the right column) to the winsorization level under the balanced group design. The vertical dotted line marks the theoretically calibrated winsorization level at $\varepsilon=0.092$; in the containment panels, the horizontal dotted line marks the nominal 95\% level.}
\label{fig:eps_balanced} 
\end{figure}

\begin{figure}[h]
\figuresize{.8}
\figurebox{20pc}{25pc}{}[Figures/fig_eps_imbalanced]
\caption{Sensitivity of absolute bias (the left column), RMSE (the middle column), and theoretical-CLT containment (the right column) to the winsorization level under the imbalanced group design. The vertical dotted line marks the theoretically calibrated winsorization level at $\varepsilon=0.092$; in the containment panels, the horizontal dotted line marks the nominal 95\% level.}
\label{fig:eps_imbalanced} 
\end{figure}

\end{document}